\documentclass[journal]{IEEEtran}
\usepackage{amsmath,amsfonts}
\usepackage{algorithmic}
\usepackage{algorithm}
\usepackage{array}
\usepackage[caption=false,font=normalsize,labelfont=sf,textfont=sf]{subfig}
\usepackage{textcomp}
\usepackage{stfloats}
\usepackage{url}
\usepackage{verbatim}
\usepackage{graphicx}
\usepackage{cite}

\usepackage{hyperref}
\hypersetup{
  colorlinks=true, 
  linkcolor=blue,   
  citecolor=blue,  
  urlcolor=blue     
}

\usepackage{multirow}
\usepackage{adjustbox}
\usepackage{upgreek} 
\usepackage{diagbox} 

\usepackage{xcolor}
\definecolor{SEGreen}{RGB}{20, 200, 60}
\usepackage{pifont} 

\newcommand{\R}{\mathbb{R}} 
\usepackage{amsthm} 
\usepackage{amssymb} 

\usepackage{stmaryrd} 
\newtheorem{theorem}{Theorem}


\DeclareMathOperator{\ReLU}{ReLU}
\DeclareMathOperator{\NR}{NR}
\DeclareMathOperator{\erf}{erf}
\usepackage[sort&compress,numbers]{natbib} 
\usepackage{booktabs} 
\renewcommand{\arraystretch}{1.2} 

\begin{document}

\title{Topology-Aware Graph Neural Network-based State Estimation for PMU-Unobservable Power Systems}

\author{Shiva Moshtagh,~\IEEEmembership{Student Member,~IEEE}, Behrouz Azimian,~\IEEEmembership{Student Member,~IEEE}, Mohammad Golgol,~\IEEEmembership{Student Member,~IEEE}, and Anamitra Pal,~\IEEEmembership{Senior Member,~IEEE}
\thanks{This work was supported in part by the Department of Energy under grant DE-EE0009355 and National Science Foundation under grant ECCS-2145063.

The authors are with the School of Electrical, Computer, and Energy Engineering (ECEE) of Arizona State University (ASU). Emails: \url{smoshta1@asu.edu}; \url{bazimian@asu.edu}; \url{mgolgol@asu.edu}; \url{Anamitra.Pal@asu.edu}
}
}

\markboth{IEEE Transactions on Power Systems}%
{Shell \MakeLowercase{\textit{et al.}}: A Sample Article Using IEEEtran.cls for IEEE Journals}


\maketitle

\begin{abstract}
Traditional optimization-based techniques for \textit{time-synchronized} state estimation (SE) often suffer from high online computational burden, limited phasor measurement unit (PMU) coverage, and presence of non-Gaussian measurement noise.
Although conventional learning-based models have been developed to overcome these challenges, they are negatively impacted by topology changes and real-time data loss. 
This paper proposes a novel deep geometric learning approach based on graph neural networks (GNNs) to estimate the states of PMU-unobservable power systems. 
The proposed approach combines graph convolution and multi-head graph attention layers inside a customized end-to-end learning framework to handle topology changes and real-time data loss.
An upper bound on SE error as a function of topology change is also derived.
Experimental results for different test systems demonstrate superiority of the proposed customized GNN-SE (CGNN-SE) over traditional optimization-based techniques as well as conventional learning-based models in presence of topology changes, PMU failures, bad data, non-Gaussian measurement noise, and large system implementation.
\end{abstract}

\vspace{-1em}

\begin{IEEEkeywords}
Graph neural network, Phasor measurement unit, State estimation, Topology change, Unobservability. 
\end{IEEEkeywords}

\section{Introduction}
\label{Introd}
\IEEEPARstart{S}{tatic} state estimation (SE) plays a pivotal role in energy management systems as it provides inputs to critical applications such as real-time contingency analysis and dynamic security assessment \cite{biswas2021mitigation,hu2023high}.
Considering the need for high-speed (sub-second) situational awareness in modern inverter-based resource-rich power systems \cite{chatzivasileiadis2023micro}, there is growing interest in phasor measurement unit (PMU)-assisted hybrid and PMU-only linear state estimators \cite{dvzafic2020real,mishra2022algebraic}.
These estimators, collectively termed \textit{optimization-based techniques} for SE in this paper,  have been the focus of numerous studies (e.g., see \cite{cheng2023survey}).
However, PMU-assisted hybrid state estimators can be 
affected by imperfect synchronization and time-skew errors \cite{zhao2018robust,darmis2023survey}.
Furthermore, the strategies proposed to mitigate these issues \cite{yang2013power,zhao2015power,manousakis2018hybrid} are generally computationally demanding, which causes hybrid state estimators to function at slower timescales \cite{jin2018analysis}.
In contrast, PMU-only linear SE (LSE) requires complete system observability by PMUs \cite{chen2022optimal}.
As noted in the first two tables of \cite{varghese2024deep}, many power utilities still lack complete PMU coverage of their bulk power systems.
Moreover, optimization-based techniques for SE often have high online computational burden and/or limited robustness against non-Gaussian measurement noise \cite{tian2021neural,azimian2021time}.

Considering these limitations of 
optimization-based SE techniques, there has been a significant emphasis on
developing \textit{learning-based approaches} for SE.
The most popular learning-based approach is the conventional deep neural network (DNN)-based SE (DNN-SE). The ability of conventional DNNs to perform 
\textit{time-synchronized} 
SE in PMU-unobservable power systems has been demonstrated in \cite{varghese2024deep,tian2021neural,azimian2021time,mestav2019learning}.
By relegating training to the offline stage, conventional time-synchronized DNN-SE
is able to provide sub-second situational awareness. Similarly, the excellent approximation and generalization 
capabilities of DNNs enable DNN-SE to handle non-Gaussian noise effectively.
However, a DNN-SE is not able to account for
structural changes occurring in the power system (e.g., line opening)
and/or the sensing system (e.g., PMU failure). These two aspects are elaborated below. 

Transmission lines can be opened in response to forced outages or maintenance \cite{huang2015dynamic} as well as for optimizing dispatch \cite{li2017real}.
As an example, Fig. \ref{TVA_branches} shows how the total number of branches in the transmission system of a power utility located in the U.S. Eastern Interconnection
changed over a period of six months.
The figure corresponds to half-hourly state estimator snapshots taken during the same hour of every day.
Notably, the figure reveals a significant variation in the number of branches between two consecutive snapshots (an average change of $\approx2$ branches).
Therefore, if a DNN-SE is not able to account for topology changes, then its performance will degrade considerably over time.

One way to address this issue is to create a DNN for detecting topology changes at high-speeds (such as the one developed in \cite{gotti2021deep}), and use it as a precursor to a DNN-SE.
Then, the DNN-SE can be re-trained via transfer learning after a topology change occurs, as was done for distribution systems in \cite{azimian2022state}.
However, the framework developed in \cite{azimian2022state} could only detect changes in switch statuses and not line openings. 
Moreover, the re-training took a few minutes, which 
may be insufficient for transmission system applications \cite{chatzivasileiadis2023micro}.
Finally, having two DNNs in series could potentially degrade overall performance as the errors of the first DNN will impinge on the next.
On a similar note, a structural change in the sensing system, resulting from a PMU failure during online operation, may necessitate training of the DNN-SE from scratch due to a mismatch in the inputs.
Considering these aspects of the problem, it is essential to explore machine learning (ML)-based SE strategies that can handle line openings and PMU failures implicitly, i.e., without additional pre-/post-processing.

\begin{figure}[ht]
\centering
\includegraphics[width=0.48\textwidth]{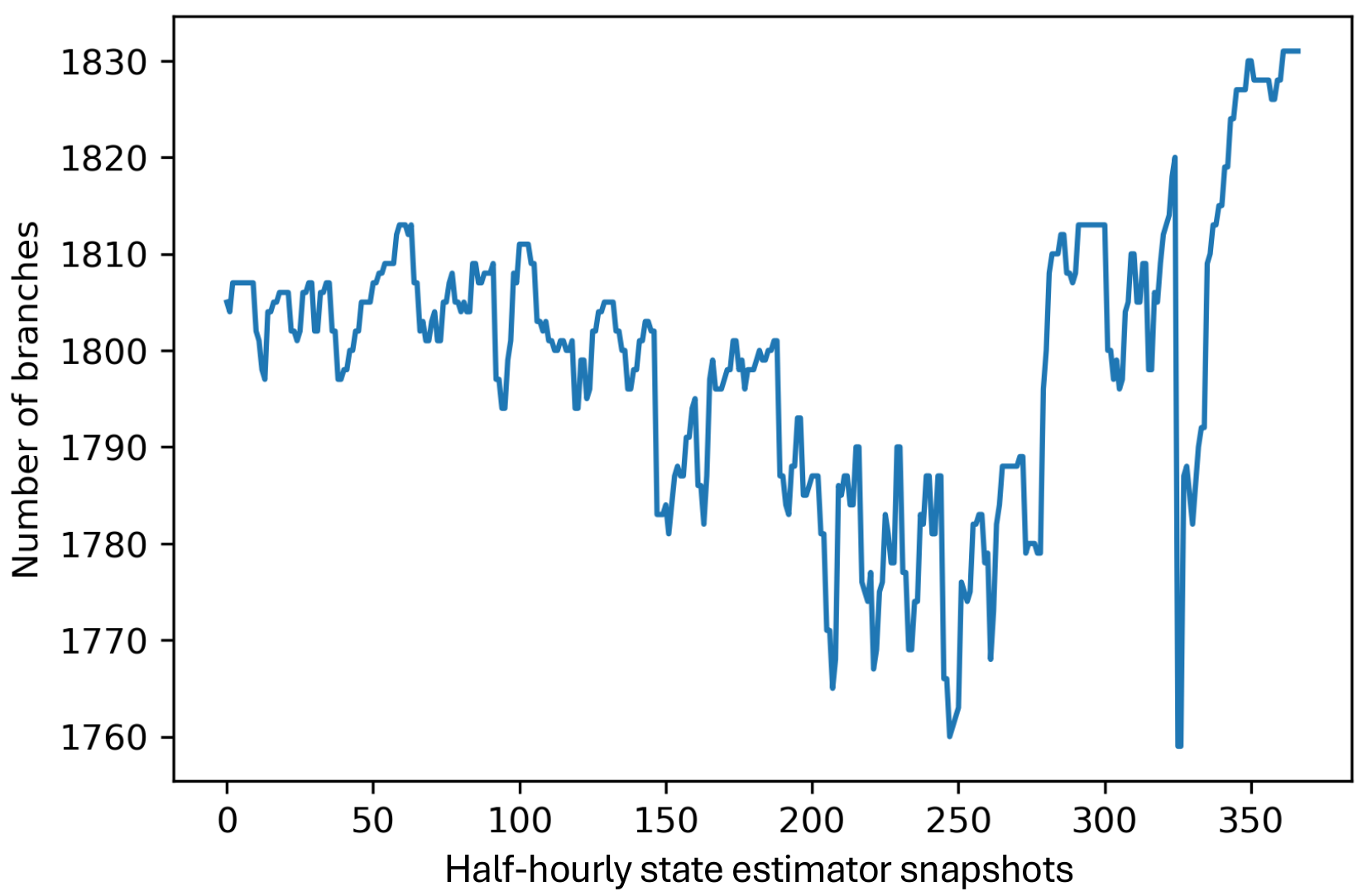}
\vspace{-0.5em}
\caption{Total number of branches in a utility power system located in the U.S. Eastern Interconnection across 360 SE snapshots (= two-per-hour-per-day for six months).}
\vspace{-1em}
\label{TVA_branches}
\end{figure}

To overcome the drawbacks of conventional DNNs, \textit{structural learning} has been introduced by the ML community, with the most relevant structural learning architecture for power systems being the graph neural network (GNN) \cite{liao2021review}.
GNNs excel at capturing intricate relationships and dependencies among nodes in a graph by aggregating data from neighboring nodes. 
The inputs of a GNN include feature-based and structure-based information of the graph appearing in the form of the node feature matrix and the adjacency matrix, respectively. Hence, the learning process of the GNN, known as \textit{geometric learning}, incorporates both the data and the physical information of the graph, making it distinct from conventional DNNs that only rely on data \cite{bronstein2017geometric}.

GNNs have been employed for a variety of power system applications.
In the following, we summarize their use in the \textit{transmission system static SE application}, which is the focus of this paper.
A GNN-based regularizer was used as a prior in \cite{yang2022data} to create a SE formulation that was robust against bad data and adversarial data.
Conversely, a DC power flow was used as a prior in \cite{wu2022integrating} to create the node feature and edge feature matrices of the GNN, which was then used to perform SE. However, the state estimators developed in \cite{yang2022data,wu2022integrating} did not use PMU data, implying that they cannot provide sub-second situational awareness.
Refs. \cite{de2022physics,kundacina2023graph,10318579} used PMU data as inputs for GNN-based SE (GNN-SE).
However, these formulations required the system to be completely observable by PMUs 
as that ensured satisfaction of the \textit{fully-observed node feature matrix} condition of GNNs  \cite{chen2022learning}.
To bypass this requirement,
use of feature propagation and augmentation techniques have been explored \cite{rossi2022unreasonable,liu2022local}.
However, such processes for finding missing features are separate from geometric learning, which leads to unstable and less reliable performance \cite{taguchi2021graph}.

In this paper, we integrate distributions learnt from historical supervisory control and data acquisition (SCADA) data into an \textit{end-to-end geometric learning process}\footnote{In this process, missing feature construction is integrated directly into geometric learning without separate pre-processing.}
to address the PMU-unobservability problem.
Subsequently, we verify its robustness to topology changes by deriving an upper bound on its maximum error as a function of the change in topology.
The salient features of the proposed \textit{customized} GNN-SE (CGNN-SE)
framework are as follows:

\begin{itemize}
\item \textit{High-speed:} The proposed framework is fast because during online operation, it solely utilizes PMU data and does not require iterative calculations/matrix inversions.
\item \textit{Overcomes unobservability:} The proposed framework addresses unobservability issues caused by limited PMU coverage and/or real-time PMU failure by incorporating distributions learnt from historical SCADA data into an end-to-end deep geometric learning process.
\item \textit{Robust to topology changes:}
Different from an approach where the ML-based state estimator must be retrained for the new topology, 
the proposed framework enables accurate and consistent SE even after a topology change takes place, without the need for subsequent retraining.
\item \textit{Provision of explainability:} The topology adaptivity of the proposed framework is explained mathematically.
\item \textit{Handles non-Gaussian noise:} The deep geometric learning process of the proposed framework enables it to better handle non-Gaussian measurement noise in comparison to traditional optimization-based state estimators.
\item \textit{Handles bad PMU data:} The 
proposed framework is incorporated with a bad data detection and correction scheme that mitigates the negative impact of bad data.
\item \textit{Scalable:} The scalability of the proposed framework is demonstrated by applying it
to large-scale systems.
\end{itemize}

The rest of the paper is structured as follows. Section \ref{GNN} presents an overview of GNN in SE context, Section \ref{unobservability} introduces the proposed CGNN-SE framework to overcome the unobservability issue, and provides an analytical explanation to show the topology adaptivity of CGNN-SE. Section \ref{results} showcases numerical results of the performance of the proposed CGGN-SE compared to traditional optimization-based techniques as well as conventional learning-based models on different test power systems, and Section \ref{Conclusions} highlights concluding remarks and future work.


\section{An overview of GNN-based SE (GNN-SE)}
\label{GNN}
The electric power grid can be represented by a graph,
$\mathcal{G}=(\mathcal{V},\mathcal{E})$, where $\mathcal{V}=\{v_1,\hdots, v_N\}$ denotes the set of nodes (vertices/buses), and $\mathcal{E}=\{(v_i,v_j)\}\subseteq\mathcal{V}\times\mathcal{V}$ represents the set of edges (transmission lines and transformers).
Each node $v_i$ in the graph is associated with a feature vector, $x_i \in \R^{f} $, where $f$ signifies the number of features for each node. These feature vectors are stacked to create the node feature matrix, $X \in \R^{N \times f}$, that represents the feature-based information of the whole graph.
In the context of time-synchronized GNN-SE, $X$ is created from PMU measurements\footnote{When PMU coverage is limited or there is sudden PMU failure, there will be missing features in the node feature matrix. This is a major limitation of the GNN-SE formulations developed in \cite{de2022physics,kundacina2023graph,10318579}.}.
Regarding the structural information of the graph, the adjacency matrix,  $A \in \R^{N \times N}$, is employed which represents the
node connections within the grid, i.e., $a_{ij} = 0$ if  $(v_i, v_j) \notin \mathcal{E}$ and $a_{ij} = 1$ if $(v_i, v_j) \in \mathcal{E}$.
In the following, we explain how the adjacency matrix and a fully-observed node feature matrix contribute to the learning process of a conventional GNN-SE.

A GNN learns via a layered architecture. At each layer, node features are updated by aggregating information from neighboring nodes using \textit{message-passing}. This involves computing a weighted sum of neighboring node features, which are then combined with the node's own features to produce an updated vector. This process captures local connections in the grid, embedding them into the learning process. Updated feature vectors are now passed to the next GNN layer. Finally, the GNN's output (final node representation) is used for prediction purposes, e.g., SE. Commonly used aggregation functions in GNNs are briefly summarized below.


\subsection{Graph Convolutional Network (GCN)}
\label{GCN}
A GCN is a simple and computationally efficient aggregation function that captures the
local graph structure through a \textit{mean} aggregation scheme as shown below:
\begin{equation}\label{eq:GCN1}
x^{(l)}_i=\sigma \left( \sum_{j\in\mathcal{N}(i)\cup\{i\} }W^{(l)}\frac{x^{(l-1)}_j}{\sqrt{|\mathcal{N}(i)||\mathcal{N}(j)|}} \right ) 
\end{equation}
where, $x^{(l)}_i \in \mathbb{R}^{f^{(l)}}$ and $x^{(l-1)}_j \in \mathbb{R}^{f^{(l-1)}}$ are feature vectors of nodes $v_i$ and $v_j$ at layers $l$ and $l-1$ with $f^{(l)}$ and $f^{(l-1)}$ number of features, respectively, $\sigma(\cdot)$ represents the rectified linear-unit (ReLU) activation function, $W^{(l)} \in \mathbb{R}^{f^{(l)} \times f^{(l-1)}}$ is the learnable weight matrix at the $l^{th}$ layer, $|\mathcal{N}(i)|$ and $|\mathcal{N}(j)|$ are the degrees of nodes $v_i$ and $v_j$, respectively, and $l=1,2,\hdots, \mathrm{L}$ refers to different layers of the GNN. 
It should be noted that the size of $W^{(l)}$ is a hyperparameter that can be adjusted at each layer $l$ to control the expressiveness and complexity of the learned node representations.

Equation \eqref{eq:GCN1} can be written in matrix form to represent the whole graph as shown below:
\begin{equation}\label{eq:GCN2}
X^{(l)}=\sigma\left(\Tilde{A}X^{(l-1)}{W^\top}^{(l)}\right)
\end{equation}
where, $\Tilde{A}=D^{-1/2}\hat{A}D^{-1/2}$ is the graph Laplacian, $\hat{A}=A+I$ represents the adjacency matrix with inserted self-loops to account for each node’s own feature vector in the update process, and $D=\text{diag}\left(\sum_i\hat{a}_{1i}, \dots, \sum_i\hat{a}_{Ni}\right)$ is the degree matrix representing the number of neighbors of each node, where $\hat{a}_{ij} \in \hat{A}$.


\subsection{Graph Attention Network (GAT)}\label{GAT}
A GAT improves upon GCN by introducing attention mechanism which offers the advantage of
\textit{adaptive} neighborhood aggregation 
\cite{velickovic2017graph}
Unlike GCN which treats all neighbors as equally important, a GAT layer adaptively and implicitly adjusts the contribution of each neighbor. 
It does this by assigning different attention weights to each neighbor node when aggregating information, and computes attention coefficients for each pair of nodes using the following equation:

\begin{equation}\label{eq:GAT1}
e^{(l)}_{ij}=\sigma_{\mathrm{leaky}}\left(\mathrm{a}^\top\left[W^{(l)}x^{(l-1)}_i \bigparallel W^{(l)}x^{(l-1)}_j\right]\right)
\end{equation}

In \eqref{eq:GAT1}, $e^{(l)}_{ij}$ is the attention coefficient between nodes $v_i$ and $v_j$ at layer $l$,  $\mathrm{a} \in \R^{2f^{(l)}}$ is a learnable parameter vector, $W^{(l)} \in \mathbb{R}^{f^{(l)} \times f^{(l-1)}}$ represents a learnable weight matrix at layer $l$, $\bigparallel$ denotes the concatenation operation, and ${\sigma_{\text{leaky}}}(\cdot)$ is the leaky-ReLU activation function.
Attention coefficients are then normalized using the $\mathrm{softmax}$ function shown below:
\begin{equation}\label{eq:GAT2}
\alpha^{(l)}_{ij}=\mathrm{softmax}\left(e^{(l)}_{ij}\right) = \frac{\mathrm{exp}\left(e^{(l)}_{ij}\right)}{\sum_{k\in\mathcal{N}(i)\cup\{i\}}\mathrm{exp}\left(e^{(l)}_{ik}\right)}
\end{equation}
where, $\alpha^{(l)}_{ij}$ is the normalized attention weight between nodes $v_i$ and $v_j$ at layer $l$.
The final aggregation function by a GAT layer is written as:
\begin{equation}\label{eq:GAT3}
x^{(l)}_i=\sigma \left(\sum_{j\in\mathcal{N}(i)\cup\{i\}}\alpha^{(l)}_{ij}W^{(l)}x^{(l-1)}_j \right)
\end{equation}
where, ${x^{(l)}_i}$ represents the updated representation of node $v_i$ in the $l^{th}$ GAT layer. It should be noted that the learnable weight matrix, $W$, governs the linear transformation of feature vectors, allowing the model to adjust how it combines information from the neighbors. 
At the same time,
the learnable attention weights, $\alpha_{ij}$, determined through attention mechanism, control the importance of neighboring nodes. This empowers the GAT layer to selectively emphasize or de-emphasize information from different neighbors.
Equation \eqref{eq:GAT3} can be represented in matrix form similar to \eqref{eq:GCN2}, with the distinction that $\Tilde{A}$ is replaced by the attention weight matrix, where its elements are $\alpha_{ij}$ if $(v_i,v_j) \in \mathcal{E}$.


\subsection{GNN-based SE (GNN-SE) Architecture}\label{GNN-SE}
The GNN-SE framework incorporates a stacked architecture, combining GCN followed by
GAT layers to leverage the strengths of both variants.
After $\mathrm{L}$ GCN and GAT layers, the last layer applies a linear transformation to the final node representations of the $\mathrm{L^{th}}$ layer, $X^{(\mathrm{L})}$, to predict the state variables.
This is mathematically described by: 
\begin{equation}\label{eq:linear}
\hat{Y}=X^{(\mathrm{L})}W^{'}+b^{'}
\end{equation}

In \eqref{eq:linear}, $\hat{Y} \in \R^{N \times s}$ represents the prediction matrix for all $N$ nodes, each of which possesses $s$ state variables, and $W^{'}$ and $b^{'}$ are the trainable parameters of the linear layer.
In our case, $s=2$ since we are predicting voltage magnitude and phase angle of every node in the power system graph. 

Since the GNN has the adjacency matrix as one of its inputs, it becomes aware of the physical connections of the power system during its learning process. 
This enables it to better handle topology changes \cite{10318579}.
However, as mentioned previously in Section \ref{Introd}, for performing GNN-SE, a fully-observed node feature matrix is required.
Since real-world power systems are often incompletely observed by PMUs \cite{varghese2024deep}, we introduce a novel GNN-SE framework in the next section that relaxes this requirement.


\section{Proposed Customized GNN-SE (CGNN-SE)}\label{unobservability}
We start by customizing the first layer of the GNN-SE described in Section \ref{GNN} (i.e., a GCN) by integrating historical SCADA data in the form of Gaussian mixture models (GMMs) into the geometric learning process of the GNN.
After representing the missing features with GMMs, we calculate the expected activation of nodes in the first hidden layer of CGNN-SE while keeping the remaining layers the same as GNN-SE.
This enables it to learn the GMM parameters and network weight parameters in an end-to-end manner without significantly increasing the computational complexity \cite{taguchi2021graph}.
The initial values of the GMM parameters are found using expectation-maximization (EM).
Next, we replace the GAT layer with a multi-head GAT (MH-GAT) layer to better capture the diverse nodal relationships using the concept of multiple attention mechanisms \cite{huang2021topology}.
Finally, we provide performance guarantees of the CGNN-SE as a function of topology change as well as the overall implementation set-up.

\subsection{Representing Missing Features with GMMs}
To deal with missing features in the power system graph as a result of PMU scarcity (due to lack of PMU coverage and/or real-time PMU data loss), we treat the feature vector $x_i \in {\mathbb{R}}^{f}$ introduced in Section \ref{GNN} as a random variable.
This random variable can be represented using a mixture of Gaussians irrespective of whether $x_i$ is missing or not, as shown below:
\begin{equation}\label{eq:GMM1}
x_i \sim \sum_{c=1}^{C} \pi^c_i \mathcal{N}(\mu^c_i,\Sigma^c_i)
\end{equation}
where, $c=1,2,\hdots, \mathrm{C}$ corresponds to the number of Gaussian components, and $\pi^c_i$, $\mu^c_i = \left[\mu_{i1}^{c}, \cdots, \mu_{if}^c\right]^\top$, and $\Sigma^{c}_i = \text{diag}(({\sigma_{i1}^{c}})^2, \cdots, ({\sigma_{if}^{c}})^2)$ are weight, mean, and covariance parameters of the $c^{th}$ Gaussian component for node $v_i$, respectively.
Following this, we can define the node feature matrix containing both missing and known features as:
\begin{equation}\label{eq:GMM2}
X \sim \sum_{c=1}^{C} p^c \mathcal{N}(M^c,S^c)
\end{equation}
where, $p^c = \left[\pi^c_1, \dots, \pi^c_n\right]^\top$ is the weight vector for each component $c$ so that $\sum_{c=1}^{C} \pi^c_i = 1$ holds true for all feature vectors $x_i \in X$. 
Furthermore, $M^c \in \R^{N \times f}$ and $S^c \in \R^{N \times f}$ are the mean and covariance matrices for each component $c$, respectively, whose rows are defined as:
\begin{align}\label{eq:GMM3}
    {m}_{i}^{c} &= \begin{cases}
        x_{i} & \text{if } v_{i} \text{ is equipped with PMU;} \\
        \mu_i^{c} & \text{otherwise}
    \end{cases} \\
    {s}_{i}^{c} &= \begin{cases}\label{eq:GMM4}
        0 & \text{if } v_{i} \text{ is equipped with PMU;} \\
        \Sigma^{c}_i & \text{otherwise}
    \end{cases}
\end{align}

Considering \eqref{eq:GMM3} and \eqref{eq:GMM4}, all node features regardless of being missing or known can be represented as GMM parameters, with the difference being that nodes equipped with PMUs directly provide the mean with zero variance, while nodes with missing features provide the corresponding parameters through fitted GMMs and given historical data. These parameters are then updated during the end-to-end learning process. 
As the first layer of the proposed CGNN-SE is a GCN, we quantify the expected activation of each node $v_i$ by calculating \eqref{eq:GCN2} while the node feature matrix $X$ follows a mixture of Gaussians. 
To calculate the expected activation, Theorem \ref{theorem1} is defined for a general case and Theorem \ref{theorem2} is defined for our specific case.

\begin{theorem}\label{theorem1}
Assuming a Gaussian distribution denoted as $\mathcal{N}\left(\mu, \sigma^2\right)$, we have:
\begin{equation}\label{eq:GMM5}
\ReLU\left[\mathcal{N}\left(\mu, \sigma^2\right)\right]=\sigma \NR\left(\frac{\mu}{\sigma}\right)
\end{equation}
where, $\NR(\cdot)$ is an auxiliary function defined as:
\begin{equation}\label{eq:GMM6}
\NR(z)=\frac{1}{\sqrt{2\pi}}\exp\left(-\frac{z^2}{2}\right)+\frac{z}{2}\left(1+\erf \left( \frac{z}{\sqrt{2}} \right) \right)
\end{equation}
and $\erf(\cdot)$ is the error function as:
\begin{equation}\label{eq:GMM7}
\erf(z)=\frac{2}{\sqrt{\pi}} \int_{0}^{z} \exp\left(-t^2\right)dt.
\end{equation}
\end{theorem}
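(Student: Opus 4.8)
The plan is to read $\ReLU[\mathcal{N}(\mu,\sigma^2)]$ as the \emph{expected activation} $\mathbb{E}[\ReLU(X)]$ when $X\sim\mathcal{N}(\mu,\sigma^2)$, which is precisely the quantity the surrounding text needs in order to propagate a Gaussian node feature through the first GCN layer. Since $\ReLU(x)=\max(x,0)$, this expectation is
\[
\mathbb{E}[\ReLU(X)]=\int_{0}^{\infty} x\,\frac{1}{\sqrt{2\pi}\,\sigma}\exp\!\left(-\frac{(x-\mu)^2}{2\sigma^2}\right)dx,
\]
so the whole proof reduces to evaluating this single truncated Gaussian integral in closed form.

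First I would standardize with the substitution $z=(x-\mu)/\sigma$, giving $x=\mu+\sigma z$ and shifting the lower limit to $-\mu/\sigma$, so that
\[
\mathbb{E}[\ReLU(X)]=\int_{-\mu/\sigma}^{\infty}(\mu+\sigma z)\,\frac{1}{\sqrt{2\pi}}e^{-z^2/2}\,dz.
\]
I would then split this into a \emph{mean} term $\mu\int_{-\mu/\sigma}^{\infty}\frac{1}{\sqrt{2\pi}}e^{-z^2/2}\,dz$ and a \emph{spread} term $\sigma\int_{-\mu/\sigma}^{\infty}\frac{z}{\sqrt{2\pi}}e^{-z^2/2}\,dz$. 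The mean term equals $\mu$ times the upper-tail probability $\Pr(Z>-\mu/\sigma)=\tfrac12\bigl(1+\erf(\mu/(\sqrt2\,\sigma))\bigr)$, obtained from the symmetry of the standard normal together with the CDF-to-$\erf$ identity $\Phi(a)=\tfrac12(1+\erf(a/\sqrt2))$. The spread term is elementary because $\frac{d}{dz}e^{-z^2/2}=-z\,e^{-z^2/2}$, so its antiderivative is $-e^{-z^2/2}$; evaluating between $-\mu/\sigma$ and $\infty$ leaves $\tfrac{\sigma}{\sqrt{2\pi}}\exp(-\mu^2/(2\sigma^2))$.

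Finally I would add the two contributions and factor out $\sigma$, writing $z=\mu/\sigma$ throughout, to obtain
\[
\mathbb{E}[\ReLU(X)]=\sigma\left[\frac{1}{\sqrt{2\pi}}e^{-z^2/2}+\frac{z}{2}\Bigl(1+\erf\bigl(z/\sqrt2\bigr)\Bigr)\right]=\sigma\,\NR(z),
\]
which is exactly \eqref{eq:GMM5}--\eqref{eq:GMM6}. There is no genuinely hard step here, since this is a textbook truncated-moment evaluation; the only points demanding care are the bookkeeping of the integration limits after the shift (the lower limit $-\mu/\sigma$ must be matched with the correct tail of $\Phi$) and the consistent use of the $\Phi$-to-$\erf$ identity, so that the result is expressed through the $\erf$ appearing in the statement rather than through the normal CDF.
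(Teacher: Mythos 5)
Your derivation is correct: the key step is reading $\ReLU[\mathcal{N}(\mu,\sigma^2)]$ as the expected activation $\mathbb{E}[\max(X,0)]$ for $X\sim\mathcal{N}(\mu,\sigma^2)$ (which is how the surrounding text uses it), and your split into the mean term $\mu\,\Phi(\mu/\sigma)=\tfrac{\mu}{2}\bigl(1+\erf(\mu/(\sqrt{2}\sigma))\bigr)$ and the spread term $\tfrac{\sigma}{\sqrt{2\pi}}e^{-\mu^2/(2\sigma^2)}$ reproduces \eqref{eq:GMM5}--\eqref{eq:GMM6} exactly. The paper does not prove this in-text but defers to \cite{smieja2018processing}, and your argument is precisely the standard rectified-Gaussian first-moment computation carried out there, so you have simply supplied in full what the paper outsources to the citation.
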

\begin{proof}
Please see \cite{smieja2018processing} for the proof.
\end{proof}

\begin{theorem}\label{theorem2}
If $X \sim \sum_{c=1}^{C} p^c \mathcal{N}(M^c,S^c)$, then given the matrices $A$ and $W$:
\begin{equation}\label{eq:GMM8}
\ReLU\left[(AXW)_{ij}\right]=\sum_{c=1}^{C}\pi^c_i\sqrt{\hat{s}_{ij}^{c}}\NR \left(\frac{\hat{m}_{ij}^{c}}{\sqrt{\hat{s}_{ij}^{c}}} \right)
\end{equation}

where, $\hat{m}_{ij}^{c}$ and $\hat{s}_{ij}^{c}$ are elements of $\hat{M}^{c}$ and $\hat{S}^{c}$, respectively, such that:
\begin{equation}\label{eq:GMM9}
\hat{M}^{c} = AM^{c}W
\end{equation}
\begin{equation}\label{eq:GMM10}
\hat{S}^{c} = (A \odot A)S^{c}(W \odot W)
\end{equation}
where, $\odot$ denotes element-wise multiplication.
\end{theorem}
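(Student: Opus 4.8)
The plan is to reduce the mixture computation to the single-Gaussian identity already furnished by Theorem~\ref{theorem1}, by first conditioning on the mixture component and then exploiting that any fixed linear map sends a Gaussian to a Gaussian. Writing the $(i,j)$ entry explicitly as $(AXW)_{ij}=\sum_{k=1}^{N}\sum_{l=1}^{f}A_{ik}X_{kl}W_{lj}$, I would observe that, conditioned on the graph being drawn from the $c^{th}$ component, $X$ is Gaussian with entrywise means $m_{kl}^{c}=(M^c)_{kl}$ and entrywise variances $s_{kl}^{c}=(S^c)_{kl}$; hence $(AXW)_{ij}$ is a fixed linear combination of jointly Gaussian variables and is therefore itself a scalar Gaussian. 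The whole problem then collapses to identifying the mean and variance of that scalar Gaussian and invoking \eqref{eq:GMM5}.

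The core computation is to evaluate those two moments. By linearity of expectation the conditional mean is $\sum_{k,l}A_{ik}m_{kl}^{c}W_{lj}=(AM^cW)_{ij}$, which is exactly $\hat m_{ij}^{c}$ as defined in \eqref{eq:GMM9}. For the variance, the essential structural fact is that each node's covariance $\Sigma_i^c$ is diagonal, so within component $c$ the entries $X_{kl}$ are mutually independent; all cross-covariance terms therefore vanish and the variance reduces to $\sum_{k,l}A_{ik}^{2}\,s_{kl}^{c}\,W_{lj}^{2}$. Recognizing $A_{ik}^{2}=(A\odot A)_{ik}$ and $W_{lj}^{2}=(W\odot W)_{lj}$ recasts this double sum as the matrix product $\big((A\odot A)S^c(W\odot W)\big)_{ij}=\hat s_{ij}^{c}$ of \eqref{eq:GMM10}. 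Applying Theorem~\ref{theorem1} to the component-$c$ Gaussian $\mathcal N(\hat m_{ij}^c,\hat s_{ij}^c)$ then yields $\sqrt{\hat s_{ij}^{c}}\,\NR\!\big(\hat m_{ij}^{c}/\sqrt{\hat s_{ij}^{c}}\big)$ for its expected rectified activation.

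To finish, I would reassemble the mixture. Because $\ReLU[\cdot]$ here denotes the expected rectified value and expectation is linear over the mixture, the total expected activation at node $v_i$ is the $\pi_i^c$-weighted sum of the per-component expressions, producing $\sum_{c=1}^{C}\pi_i^{c}\sqrt{\hat s_{ij}^{c}}\,\NR\!\big(\hat m_{ij}^{c}/\sqrt{\hat s_{ij}^{c}}\big)$, which is \eqref{eq:GMM8}. The main obstacle I anticipate is not Theorem~\ref{theorem1} itself but the variance step together with the bookkeeping of the mixture weights: one must argue carefully that the diagonal (per-node, per-feature) covariance structure of $S^c$ is precisely what makes the cross terms drop and collapses the variance into the clean Hadamard-square form, and one must justify attaching the node-specific weights $\pi_i^c$ to the aggregated activation at $v_i$ even though $(AXW)_{ij}$ mixes features drawn from the whole neighborhood of $v_i$. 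I would make both points explicit, since the elegance of \eqref{eq:GMM9}--\eqref{eq:GMM10} rests entirely on the independence assumption.
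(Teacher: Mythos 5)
Your proposal is correct and follows essentially the same route as the paper's proof: expand $(AXW)_{ij}$ as the double sum $\sum_{k,l}A_{ik}X_{kl}W_{lj}$, identify the per-component conditional mean $(AM^{c}W)_{ij}$ and variance $\left((A\odot A)S^{c}(W\odot W)\right)_{ij}$, apply Theorem~\ref{theorem1} componentwise, and mix with weights $\pi_i^{c}$. The subtlety you flag at the end --- that a sum over the neighborhood of independently mixed nodes is only a $C$-component mixture with weights $\pi_i^{c}$ under a shared-component convention --- is genuine, but the paper's own proof asserts that step without further justification, so your treatment is, if anything, slightly more explicit.
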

\begin{proof}
Please refer to Appendix \ref{appendix1} for the proof.
\end{proof}

\vspace{-1.2em}

\subsection{Multi-head Graph Attention Network (MH-GAT)}
\label{MH-GAT}
Next, the GAT layer of GNN-SE is replaced by a MH-GAT in CGNN-SE.
The MH-GAT has $K$ attention heads running in parallel, with each head learning a different attention weight and hence, a different aspect of the relationship between the nodes.
For each head $k$, the attention coefficients $e^k_{ij}$ and normalized attention weights $\alpha^k_{ij}$ are calculated using \eqref{eq:GAT1} and \eqref{eq:GAT2}, but with head-specific parameters. Then, their corresponding feature vectors are concatenated, resulting in the following feature representation for each node $v_i$:
\begin{equation}\label{eq:GAT4}
x^{(l)}_i=\bigparallel_{k=1}^K \sigma \left(\sum_{j\in\mathcal{N}(i)\cup\{i\}}\alpha^{k,(l)}_{ij}W^{k, (l)}x^{(l-1)}_j \right)
\end{equation}
where, $k=1,2,\hdots, \mathrm{K}$ refers to different attention heads, $\bigparallel$ denotes the concatenation operation, and $\alpha^{k,(l)}_{ij}$ and $W^{k, (l)}$ are learnable attention weight and weight matrix associated with the $k^{th}$ attention head in the $l^{th}$ layer, respectively.
Note that, in this setting, the final node representation will have $Kf^{(l)}$ features rather than $f^{(l)}$ features, which will be adjusted by the linear transformation in the final linear layer of the CGNN-SE model.
An illustration of MH-GAT is shown in Fig. \ref{multi-head GAT}. 
Equation \eqref{eq:GAT4} can be written in matrix form to represent the whole graph as shown below:
\begin{equation}\label{eq:GAT5}
X^{(l)}=\bigparallel_{k=1}^K \sigma\left(\mathcal{A}^{k}X^{(l-1)}{W^\top}^{k, (l)}\right)
\end{equation}
where, $\mathcal{A}^k \in \R^{N \times N}$ is the attention weight matrix for the $k^{th}$ attention head whose elements are $\alpha^k_{ij}$ if $(v_i,v_j) \in \mathcal{E}$, and $\mathrm{0}$ otherwise. 
The concept of multi-head attention enhances the versatility of MH-GAT (over a regular GAT), resulting in improved feature representations for each node and enhanced estimation performance.

\begin{figure}[ht]
\centering
\includegraphics[width=0.394\textwidth]{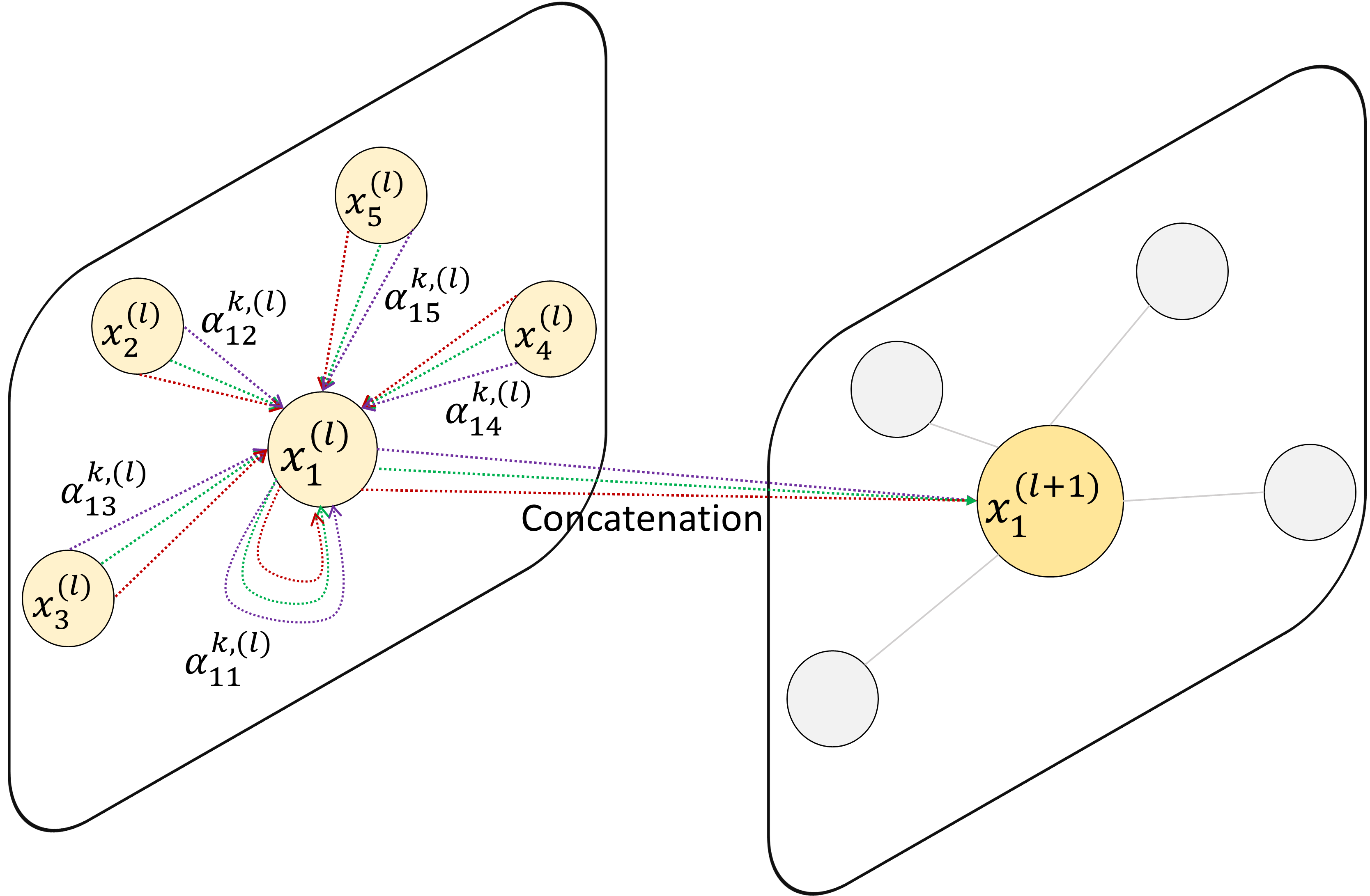}
\vspace{-1em}
\caption{An illustration of the MH-GAT for node 1 at layer $l$ with $K = 3$ heads. Various arrows represent distinct attention weights that capture diverse aspects of the relationships between neighboring nodes. The aggregated information is then concatenated to compute the node feature for node 1 in layer $l+1$.}
\label{multi-head GAT}
\end{figure}

\vspace{-1em}

\subsection{Explanation of Topology Adaptivity}
\label{explainability}
We now provide a mathematical explanation showcasing the stability of the proposed CGNN-SE to topology changes.
Let the mapping function modeled by the GNN of the CGNN-SE framework be $\Phi(A,X,W)$.
The topology change appears in the form of a perturbation to $A$, followed by a perturbation in the $X$ created using PMU data \cite{gama2020stability}. 
Let $A'$ and $X'$ be the perturbed adjacency and node feature matrices after a topology change.
In this context, Theorem \ref{theorem3} is defined to compare the output of the mapping function $\Phi(\cdot)$ under both scenarios (before and after the topology change) by evaluating the upper bound of the difference between the two outputs.

\begin{theorem}\label{theorem3}
Let $A$, $X$, $A'$, and $X'$ be the original and perturbed adjacency and node feature matrices, such that $\|A-A'\| \leq \epsilon$.
Given the mapping function $\Phi(A,X,W)$, if $\left| {w^{(l)}_{ij}}\right| \leq \delta$ and $\left\lVert w^{(l)}_{ij}A'\right\rVert \leq B $ 
where $w_{ij}^{(l)} \in W^{(l)}$, and $\lambda$ be the norm of first layer's features, then the following is true:
\begin{equation}\label{eq:exp2}
\lVert \Phi(A,X,W) - \Phi(A',X',W) \rVert \leq  \sqrt{2}\lambda\delta\epsilon L B^{L-1} F^{L-2}
\end{equation}
where, $L$ denotes the number of layers and $F$ denotes the number of features in each layer.
\end{theorem}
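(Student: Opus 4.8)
The plan is to bound the output discrepancy layer by layer and then unroll a recursion across the $L$ layers. I would write the generic per-layer map of $\Phi$ in the form $X^{(l)}=\sigma\!\left(A X^{(l-1)} W^{(l)}\right)$, where $\sigma$ is the ReLU (for the first customized layer, the expected activation $\sqrt{\hat{s}}\,\NR(\hat{m}/\sqrt{\hat{s}})$ of Theorem~\ref{theorem2}; for the MH-GAT layers $A$ is replaced by the attention matrix $\mathcal{A}^k$ followed by the concatenation of \eqref{eq:GAT5}). The two structural facts I would lean on are that (i) ReLU, leaky-ReLU, and $\NR$ are all $1$-Lipschitz, so differences pass through activations without amplification, and (ii) the softmax of \eqref{eq:GAT2} makes every attention matrix row-stochastic, so each effective propagation operator has a uniformly controllable norm.

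First I would set $\Delta X^{(l)} = X^{(l)} - X'^{(l)}$ and use $1$-Lipschitzness to get $\lVert \Delta X^{(l)} \rVert \le \lVert A X^{(l-1)} W^{(l)} - A' X'^{(l-1)} W^{(l)} \rVert$. Telescoping by adding and subtracting $A' X^{(l-1)} W^{(l)}$ then yields
\[
\lVert \Delta X^{(l)} \rVert \le \lVert A' W^{(l)} \rVert \, \lVert \Delta X^{(l-1)} \rVert + \lVert A - A' \rVert \, \lVert W^{(l)} \rVert \, \lVert X^{(l-1)} \rVert .
\]
The first term propagates the previous layer's error; using $\lVert w^{(l)}_{ij} A' \rVert \le B$ together with the counting of the $F$ feature channels in the matrix product bounds its multiplier by a factor of order $B F$. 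The second term is the fresh perturbation injected at layer $l$; using $\lVert A - A' \rVert \le \epsilon$ and $\lvert w^{(l)}_{ij} \rvert \le \delta$ bounds it by $\epsilon\,\delta\,\lVert X^{(l-1)} \rVert$.

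Next I would control the unperturbed feature norms $\lVert X^{(l-1)} \rVert$ by iterating the same propagation factor, starting from the first-layer feature norm $\lambda$, which gives geometric growth in powers of $B$ and $F$. Substituting this into the recursion and unrolling from $l=1$ to $L$ (with the base difference matched to the first-layer features) collapses the telescoped sum into $L$ terms of equal order, producing the $L$ prefactor; collecting the powers accumulated over the $L-1$ propagation steps gives $B^{L-1} F^{L-2}$, while the single injected perturbation supplies $\delta\epsilon\lambda$. The remaining $\sqrt{2}$ I would extract either from the MH-GAT concatenation, where bounding the norm of two stacked head-outputs uses $\sqrt{a^2+b^2}\le\sqrt{2}\max(a,b)$, or equivalently from combining the mean and variance contributions ($\hat{M}^c$ and $\hat{S}^c$) of the first GMM layer of Theorem~\ref{theorem2}.

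The step I expect to be the main obstacle is the attention layers: the weights $\alpha^{k}_{ij}$ themselves depend on $A$ through the neighborhood $\mathcal{N}(i)$ and on the features, so a topology perturbation alters the attention matrix nonlinearly, and the naive add-and-subtract does not cleanly isolate $A-A'$. I would handle this by folding the attention coefficients into the effective propagation operator and bounding that operator uniformly via row-stochasticity, so that the single hypothesis $\lVert w^{(l)}_{ij} A' \rVert \le B$ absorbs them. Two secondary difficulties are tracking the feature-dimension change from $F$ to $KF$ caused by concatenation consistently through the norm bookkeeping (which is precisely what the $F^{L-2}$ factor should account for), and justifying that the input perturbation $X \to X'$ is itself driven by, and controllable in terms of, the topology perturbation, so that it does not contribute an uncontrolled additive term to the final bound.
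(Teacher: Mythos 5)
Your overall strategy is the same as the paper's: it too writes each layer as $\sigma$ applied to $A X^{(l-1)} W^{(l)}$, invokes the Lipschitz property of $\sigma$, adds and subtracts the cross term $H'^{(L)}_{cg}x^{(L-1)}_{:g}$, applies the triangle inequality and sub-multiplicativity to reach the recursion $\lVert x^{(l)}_{:c}-x'^{(l)}_{:c}\rVert \le C_\sigma\sum_g\bigl(\delta\epsilon\lVert x^{(l-1)}_{:g}\rVert + B\lVert x^{(l-1)}_{:g}-x'^{(l-1)}_{:g}\rVert\bigr)$, bounds the clean feature norms geometrically from $\lambda$, and unrolls over layers to collect $L\,B^{L-1}F^{L-2}$. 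Three details diverge. First, the $\sqrt{2}$ does not come from the MH-GAT concatenation or from the mean/variance split of the GMM layer: the paper decomposes the output into its $f^{(L)}$ column graph signals, so summing the squared column norms in \eqref{eq:app6} yields a factor $\sqrt{f^{(L)}}=\sqrt{2}$ simply because two states are estimated per node. Second, the obstacle you rightly flag with the attention layers is not actually overcome in the paper: the proof treats every layer as a plain GCN of the form \eqref{eq:app5} and never models how $\mathcal{A}^k$ or the customized first-layer activation depends on $A$ and the features; your row-stochasticity patch is plausible but is not what the paper does. Third, the propagated-difference term $B\lVert x^{(l-1)}_{:g}-x'^{(l-1)}_{:g}\rVert$ --- which also subsumes the input perturbation $X\to X'$ you worry about --- is never bounded in terms of $\epsilon$; the paper declares it ``negligibly small'' by appealing to the localized effect of topology changes and a citation. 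In short, your plan reproduces the paper's argument where the paper actually argues, and the two places you anticipated difficulty are precisely the places the paper resolves by assumption rather than by derivation.
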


\begin{proof}
Please refer to Appendix \ref{appendix2} for the proof.
\end{proof}

Theorem \ref{theorem3} proves that in the event of a topology change resulting in a perturbed node feature matrix $X'$ and adjacency matrix $A'$,  
the difference between the outputs of the CGNN-SE framework is bounded by the expression in \eqref{eq:exp2}. 
Note that this bound depends on the architecture (number of layers and features) of the trained GNN of the CGNN-SE framework and the difference between the underlying and perturbed adjacency matrices.
Consequently, as the model becomes wider and deeper or this difference becomes larger, the bound becomes loose.
Therefore, during training, our aim is to identify the smallest architecture through hyperparameter tuning that gives a very low validation loss.
Doing this ensures that we have a balance between accuracy and stability.

\vspace{-1em}

\subsection{Overall Implementation}\label{overall-implementation}
The proposed CGNN-SE approach, illustrated in Fig. \ref{RGNN-SE} and detailed in Algorithm \ref{algorithm1}, uses \eqref{eq:GMM8} to analytically compute the expected activation of the nodes with missing features, represented as GMMs, in the first GNN layer.
This enables the processing of missing features and geometric learning within the same neural network architecture as an end-to-end procedure.
The expected activation is then employed as input for the next GNN layer within the CGNN-SE framework.
Finally, the GAT layer is replaced by a MH-GAT layer to better learn the intricate nodal relationships within the power system graph through multiple attention mechanisms.

\begin{figure*}
\centering
\includegraphics[width=0.9\textwidth]{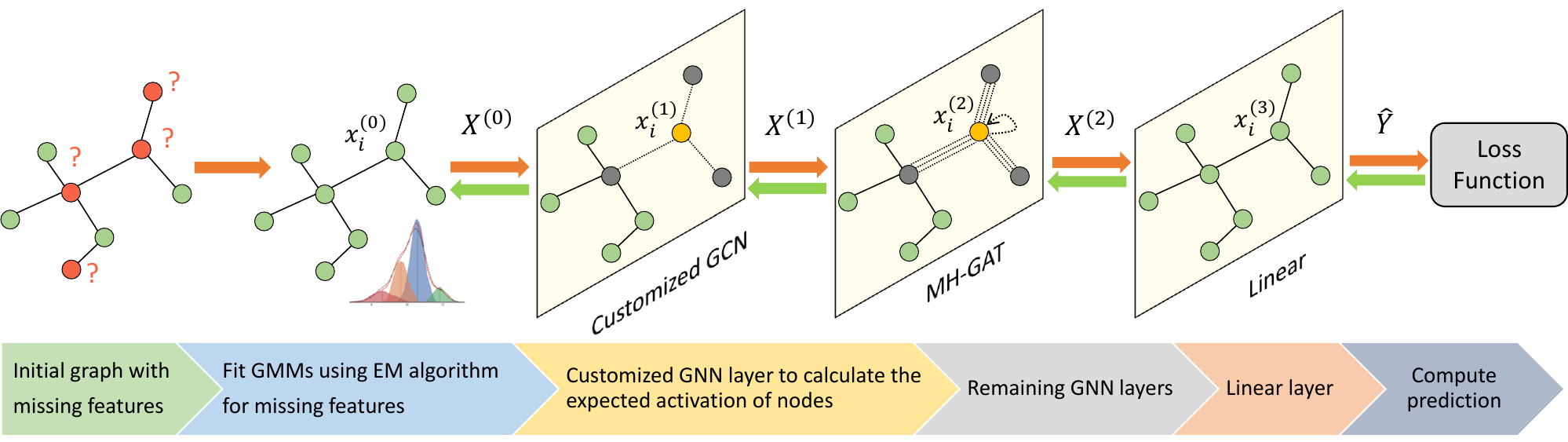}
\caption{The proposed CGNN-SE framework showcases an end-to-end geometric learning process, where a power system graph with missing features is fed into a customized GCN layer to calculate expected activation of the nodes. The diagram also illustrates the message-passing process by which a given node (highlighted in yellow in the figure) extracts information from its neighbors.
}
\label{RGNN-SE}
\end{figure*}

The model parameters described in Algorithm \ref{algorithm1} include GNN parameters (weights and biases) as well as GMM parameters (weight, mean, and covariance). 
The algorithm simultaneously learns both types of parameters to enhance the overall learning process. Note that GNN parameters are initialized randomly, while GMM parameters are initialized using EM 
to ensure a favorable starting point.
More details about the initialization process is provided in Section \ref{simulation-setup}.

\begin{algorithm}
	\caption{Algorithm of the proposed CGNN-SE}\label{algorithm1}
 \textbf{Input:} Default connections of the power system to form the adjacency matrix $A$; Locations of existing PMUs; PMU data for nodes equipped with PMUs; Historical SCADA data for all the nodes; 
 Number of Gaussian components $C$ to form the node feature matrix $X$; Number of GNN layers $L$, Number of attention heads $K$.\\
  \textbf{Output:} Predicted states (voltage magnitude and phase angle) of all nodes in the power system graph 

\begin{algorithmic}[1]
\STATE \textbf{Initialize} model parameters including GNN parameters and GMM parameters 
\WHILE{$\mathrm{epoch} < \mathrm{epochs}$}
\STATE $X^{(1)} \leftarrow$ Eq. \eqref{eq:GMM8}
\FOR{$l=2, \dots, L-1$}
\STATE $X^{(l)} \leftarrow$ Eq. \eqref{eq:GCN2}
\ENDFOR
\STATE $X^{(L)} \leftarrow$ Eq. \eqref{eq:GAT5}
\STATE $\hat{Y} \leftarrow $ Eq. \eqref{eq:linear}
\STATE $\mathcal{L} \leftarrow \mathcal{L}(\hat{Y})$
\STATE Minimize $\mathcal{L}$ and update GNN parameters and GMM parameters using gradient descent
\ENDWHILE

\end{algorithmic}
\end{algorithm}

Once the proposed CGNN-SE model is trained, it can be employed in real-time as measurements arrive from PMUs. Given that PMU measurements may contain bad data in the form of outliers that can negatively impact the performance of the CGNN-SE, a bad data detection and correction (BDDC) scheme is necessary.
For performing ML-based SE in PMU-unobservable power systems, \cite{mestav2019bayesian} proposed a Wald test-based BDDC scheme to tackle outliers.
The Wald test detects bad data by identifying unexpected or significantly deviant data points in the statistical sense
\cite{liu2018multichannel}.
This test is based on two hypotheses: the null hypothesis (\(H_0\)), which represents the distribution of measurements \textit{without} bad data, characterized by a mean \(\mu_0\) and variance \(\sigma_0^2\) that are learned during training, and the alternative hypothesis (\(H_1\)), which represents the distribution of measurements \textit{with} bad data, featuring a mean and variance that differ significantly from
those of 
\(H_0\).
Mathematically, the Wald test can be expressed as:
\begin{equation}\label{eq:wald_test}
    \left |\frac{z-\mu_0}{\sigma}\right| \lessgtr_{H_0}^{H_1} Q^{-1}\left(\frac{\alpha}{2}\right)
    \end{equation}
In \eqref{eq:wald_test}, $Q(y) = \frac{1}{\sqrt{2\pi}} \int_{y}^{\infty} \exp\left(-\frac{u^2}{2}\right)du$ denotes the tail of the distribution, and $\alpha$ is a tunable parameter that specifies the false positive rate.
The Wald test leverages the fact that training is performed using good-quality data\footnote{This is ensured by \textit{not} training the CGNN-SE directly using the SCADA data since it may be erroneous; the training details are provided in Section \ref{simulation-setup}.}. Once the limits of good quality data are established during training, any testing data that falls outside these limits can be classified as bad data. Identified bad data points are then corrected by replacing them with the mean value obtained from the training dataset.
We selected the Wald test as our BDDC scheme because it is well-suited for ML-based SE techniques and can operate at PMU timescales ($\leq 33$ ms).
In the proposed framework, this scheme is employed as a pre-processing step during the online stage to minimize the impact of bad data.


\section{Simulation Results}\label{results}
In this section, we evaluate the performance of the proposed CGNN-SE in comparison to conventional optimization-based and learning-based state estimators for the IEEE 118-bus and 2000-bus Texas systems, as well as with a state-of-the-art static state estimator (\cite{jabr2023complex}) for the 1354pegase system (part of the European high-voltage transmission network from the PEGASE project).
The robustness of the CGNN-SE is also verified under diverse scenarios, including limited PMU coverage, topology changes, real-time PMU failures, and presence of non-Gaussian measurement noise and bad data.

\subsection{Simulation Setup and CGNN-SE Structure}
\label{simulation-setup}
The training process for the CGNN-SE involves using many training graph samples to represent different operating conditions (OCs).
The training graphs are defined using adjacency and node feature matrices. 
The adjacency matrix remains fixed for all training graphs, but the 
entries of the node feature matrices change as they contain 
PMU data from different OCs.
The overall process of data generation and graph definition, which occur during the offline stage, is depicted in Fig. \ref{flowchart}. It begins with historical load data from SCADA (which may be erroneous) being fed into a distribution learner, where the distribution of all loads is learnt using Kernel density estimation.
Independently picked samples from these distributions are set as inputs to a power flow solver to generate voltage phasors for different OCs. The outputs of the solver are directly used for nodes equipped with PMUs, after adding appropriate noise (e.g., 1\% total vector error (TVE) \cite{8577045}).
For a node without PMU (missing features),
a GMM is fitted to the output of the power flow solver.
By not using the SCADA data directly to train the CGNN-SE, we ensure that any reasonable errors present in the SCADA data do not impact the proposed state estimator's performance.

\begin{figure*}[ht]
\centering
\includegraphics[width=0.68\textwidth]{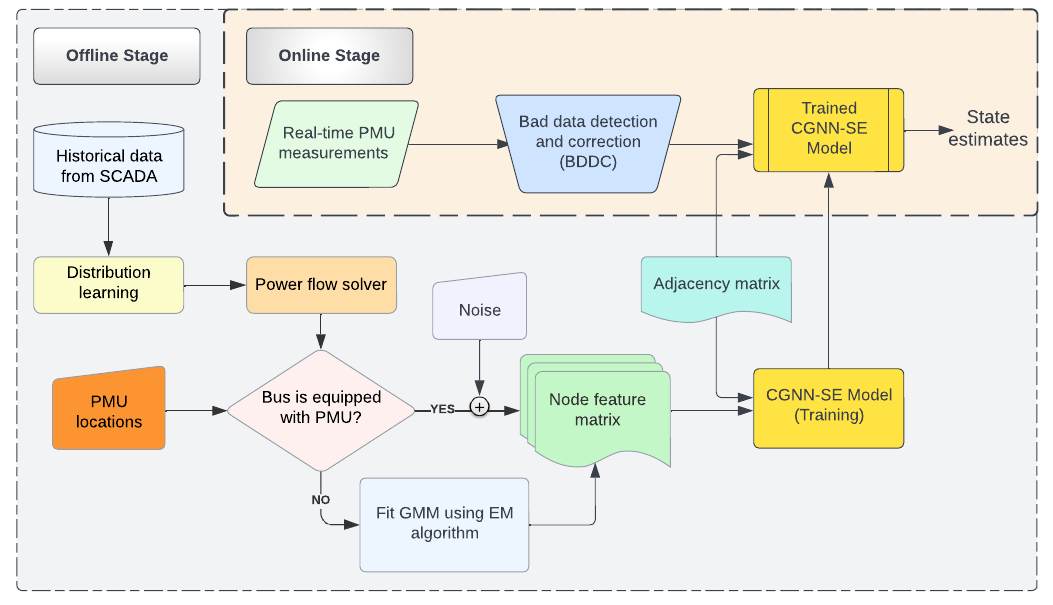}
\vspace{-1.2em}
\caption{Overall implementation of the proposed CGNN-SE}
\label{flowchart}
\end{figure*}

Note that unless specified otherwise, it is assumed that PMUs are placed only on the highest voltage buses (11 buses at 345 kV level for the 118-bus system and 120 buses at 500 kV level for the 2000-bus system).
This is based on the fact that power utilities first place PMUs at the highest voltage level.
Moreover, when a PMU is placed at a bus, it is assumed that all outgoing lines of that bus are observed. This is because the additional cost of placing more PMUs at buses that already have at least one PMU, is incremental \cite{pal2017general}.
For the 1354pegase system, the number of PMUs and noise model are chosen based on \cite{jabr2023complex}.

The CGNN-SE architecture consists of one GCN layer followed by a MH-GAT layer, with $K=4$ attention heads, with both layers having $f^{(l)}=50$ hidden features.
Note that the first layer is customized to calculate the expected activation of nodes with missing features represented as GMMs to handle the problem of PMU scarcity as explained in Section \ref{unobservability}.
The CGNN-SE model was trained using 5000 graph samples with a batch size of 10 on the 118-bus system, resulting in a training time of approximately 20 seconds/epoch. For the larger 2000-bus system, the same architecture and batch size of 2 resulted in a training time of around 2 minutes/epoch.

The online stage of the proposed framework, shown in Fig. \ref{flowchart}, is the real-time processing phase in which PMU measurements, after passing through the BDDC block,
are fed into a pre-trained CGNN-SE.
All simulations were conducted on a high-performance computer equipped with 256 GB of RAM, an Intel Xeon 6246R CPU operating at 3.40 GHz, and an Nvidia Quadro RTX 5000 with 16 GB of GPU memory.
Depending on the complexity of the system and number of epochs, the entire training process can take several days to complete.
During the training process, we employed early stopping to prevent overfitting and reduce training time, with the model typically converging within 4000 epochs.
During the training, the GPU showed a maximum memory usage of approximately 11.8\% (1955 MB out of 16384 MB).
Lastly, note that training is done in the offline stage, which will not impact the online (real-time) operation of the CGNN-SE. Furthermore, the inherent robustness of the CGNN-SE to topology changes implies that the training need not be repeated often (unlike in the case of a DNN-SE).

\subsection{State Estimation Results for 118-bus and 2000-bus Systems}\label{feasibility}
We verify the robustness of the proposed CGNN-SE framework to missing features in PMU-unobservable power systems by comparing its performance with other optimization-based and learning-based benchmarks.
The optimization-based state estimators include SCADA-SE and LSE, both of which obtain estimates by solving the maximum likelihood estimation problem in the least squares sense \cite{zhao2020robust}.
Measurements for SCADA-SE comprise all sending-end active power flows and voltage magnitudes, corrupted by 20\% and 10\% additive Gaussian noise, respectively. 
The LSE results are obtained by placing PMUs at 32, and 512 optimal locations, which are the minimum needed for full system observability in 118-bus and 2000-bus systems, respectively, \cite{pal2013pmu}. 
PMU measurements are subjected to 1\% Gaussian TVE for this case-study.

The learning-based state estimators include support vector regression (SVR), DNN-SE, GNN-SE, and the proposed CGNN-SE.
The SVR, DNN-SE and CGNN-SE are trained and tested using PMU measurements coming from the highest voltage buses, while the GNN-SE receives PMU measurements from the optimal locations used in LSE, to satisfy its fully-observed node feature matrix requirement.
The DNN in DNN-SE has a fully-connected feed-forward architecture \cite{varghese2024deep}, while the architecture of the GNN for GNN-SE was obtained from \cite{10318579}.
The optimization-based state estimators were implemented using MATLAB, whereas TensorFlow was used for DNN-SE, and PyTorch Geometric was used for GNN-SE and CGNN-SE.
A comparison of the results of these six state estimators for this case-study is provided in Table \ref{table1}.

\begin{table}[ht]
\centering
\caption{Performance comparison of the Proposed CGNN-SE model with optimization-based and learning-based state estimators}
\resizebox{\columnwidth}{!}{
\begin{tabular}{@{}clcccc@{}}
\toprule
\multirow{2}{*}{\rotatebox[origin=c]{90}{\textbf{System}}} & \multirow{2}{*}{\textbf{Approach}} & \multirow{2}{*}{\textbf{\begin{tabular}[c]{@{}c@{}}Num. of buses\\ with PMUs\end{tabular}}} & \multirow{2}{*}{\textbf{\begin{tabular}[c]{@{}c@{}}Magnitude\\ MAPE (\%)\end{tabular}}} & \multirow{2}{*}{\textbf{\begin{tabular}[c]{@{}c@{}}Phase Angle\\ MAE (degrees)\end{tabular}}} & \multirow{2}{*}{\textbf{\begin{tabular}[c]{@{}c@{}}Computation\\ Time (s)\end{tabular}}} \\
& & & & & \\ [2pt]
\midrule
\multirow{6}{*}{\rotatebox[origin=c]{90}{IEEE 118-bus}} & SCADA-SE & 0 & 0.746 & 0.306 & 1.24 \\
& LSE & 32 & 0.270 & 0.143 & 8.2e-3 \\
& SVR & 11 & 0.309 & 0.698 & 8.1e-4 \\
& DNN-SE & 11 & 0.161 & 0.230 & 1.0e-4 \\
& GNN-SE & 32 & 0.047 & 0.080 & 7.2e-4 \\
& \textbf{CGNN-SE} & \textbf{11} & \textbf{0.018} & \textbf{0.027} & \textbf{9.0e-4} \\
\midrule 
\multirow{6}{*}{\rotatebox[origin=c]{90}{2000-bus Texas}} & SCADA-SE & 0 & 0.654 &  2.121 & 133 \\
& LSE & 512 & 0.261 & 0.138 & 1.27 \\
& SVR & 120 & 0.664 & 0.502 & 7.8e-2 \\
& DNN-SE & 120 & 0.180 & 0.097 & 2.0e-4 \\
& GNN-SE & 512 & 0.151 & 0.083 & 3.3e-4 \\
& \textbf{CGNN-SE} & \textbf{120} & \textbf{0.093} & \textbf{0.064} & \textbf{1.9e-2} \\
\bottomrule
\end{tabular}%
}
\label{table1}
\end{table}

Table \ref{table1} displays the mean absolute percentage error (MAPE) for magnitude estimates and the mean absolute error (MAE) for phase angle estimates.
It is evident from the table that the proposed CGNN-SE outperforms both optimization-based and learning-based state estimators with fewer/similar number of PMUs.
The exceptional performance of the CGNN-SE is attributed to its ability to capture intricate relationships among the nodes using customized GCN and MH-GAT layers, as well as its ability to jointly optimize network parameters and GMM parameters within the end-to-end geometric learning process.

The average computation time for producing an estimate for each state estimator is also listed in Table \ref{table1}.
It is clear from the entries that the learning-based models consistently outpace the optimization-based models.
Note that GNN-SE and CGNN-SE have slightly higher computation times compared to DNN-SE since they analyze both the data and the structure of the power system.
Additionally, the slightly higher computation time for CGNN-SE compared to GNN-SE (particularly, for the 2000-bus system) can be attributed to its processing of (many) missing features in the form of GMM parameters. 
Lastly, it can be inferred from the table that even for large systems, CGNN-SE is able to operate at PMU timescales ($\leq$ 33 ms), which is a requirement for time-synchronized PMU-only SE.

\subsection{Robustness to Topology Changes}\label{topology-change}
Next, we tested the robustness of CGNN-SE in handling topology changes during online evaluation.
Such changes are common in practical power system operations \cite{huang2015dynamic,li2017real}, and typically require retraining conventional learning-based SE models such as DNN-SE to update mapping relationships for the new topology.
However, geometric learning-based techniques, such as CGNN-SE, offer robust estimation without the need for \textit{ex post facto} learning after topology change.
To verify this, we evaluated the performance of the \emph{pre-trained} CGNN-SE and DNN-SE for the nominal (base) topology under five different cases of single line openings. 
The outage was done on those lines that had the highest power flowing through them, but did not create islands in the system after disconnection. 
The results are depicted in Fig. \ref{topology} in terms of magnitude MAPE and phase angle MAE. 
It is evident from Fig. \ref{topology} that topology changes during online evaluation pose challenges for the generalization capability of DNN-SE, while CGNN-SE exhibits immunity to such changes owing to its topology-aware structure (bars in lighter shades are much taller than the bars in darker shades).

\begin{figure}[ht]
\centering
\includegraphics[width=0.47\textwidth]{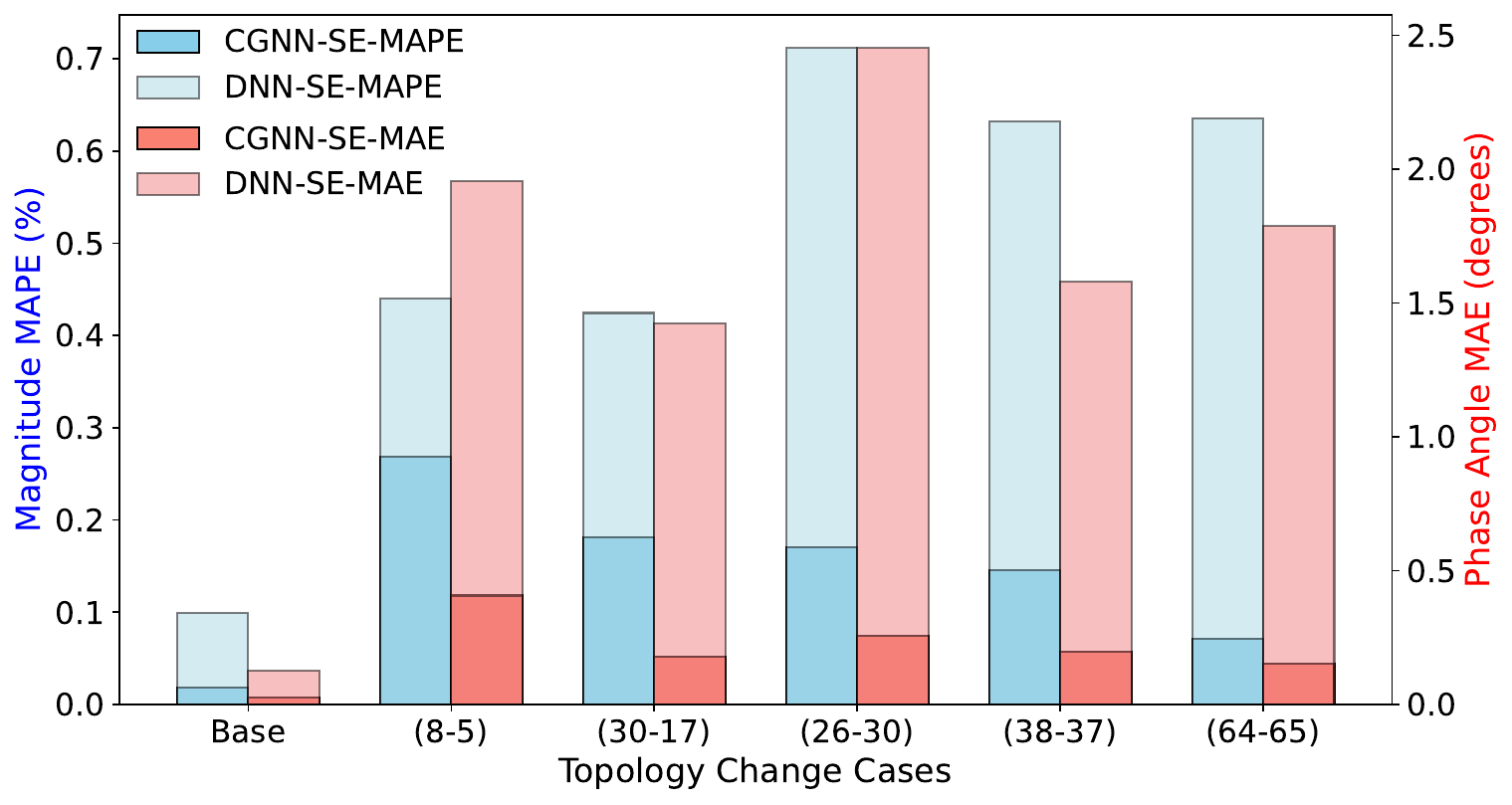}
\caption{Comparing magnitude MAPE and phase angle MAE of DNN-SE and CGNN-SE for outages of five lines of the IEEE 118-bus system that have the highest power flowing through them. On the X-axis, $\mathrm{Base}$ refers to the nominal (base) topology where there are no line outages. The other indices denote the \textit{from-to} buses of the lines that have been removed.
}
\label{topology}
\end{figure}

The robustness of the proposed CGNN-SE may vary depending on which specific line opens.
This is because it is a function of the proximity of the line to the PMUs
as well as the resulting shift in the power system’s overall operating point.
We recognized that outages of lines with higher power flows have a greater effect on the system states \cite{10318579}.
Therefore, the scenarios considered in Fig. \ref{topology} could be deemed as the worst-case single line outage ($N-1$) conditions that the CGNN-SE model could be subjected to.

The topology adaptivity of the proposed CGNN-SE was explained mathematically in Section \ref{explainability} by providing an \textit{upper bound} on the change in SE output after a topology change.
Note that the upper bound provided in \eqref{eq:exp2} holds for $N-k$ contingencies, where $k\geq1$. 
This upper bound depends on the structure of the CGNN-SE and the difference between the original and perturbed adjacency matrices. 
While the former can be controlled by hyperparameter tuning, the latter is a function of the severity of the topology change (i.e., the number of line openings in the power system graph).
Considering the current structure of the CGNN-SE, we verified its stability in Fig. \ref{upper-bound} under multiple line openings. The figure displays the magnitude MAPE and phase angle MAE of the CGNN-SE for four scenarios, where $\mathrm{Base}$ represents the nominal topology without any line outages, while $N-1$, $N-2$, and $N-3$ represent one, two, and three line openings, respectively.
It can be realized from the figure
that as more lines open, the performance of DNN-SE deteriorates significantly (there is a marked increase in both magnitude and phase angle errors).
However, CGNN-SE remains relatively stable with low magnitude and angle errors, demonstrating its robustness and superior adaptability to varying network topologies. 

\begin{figure}[t]
\centering
\includegraphics[width=0.489\textwidth]{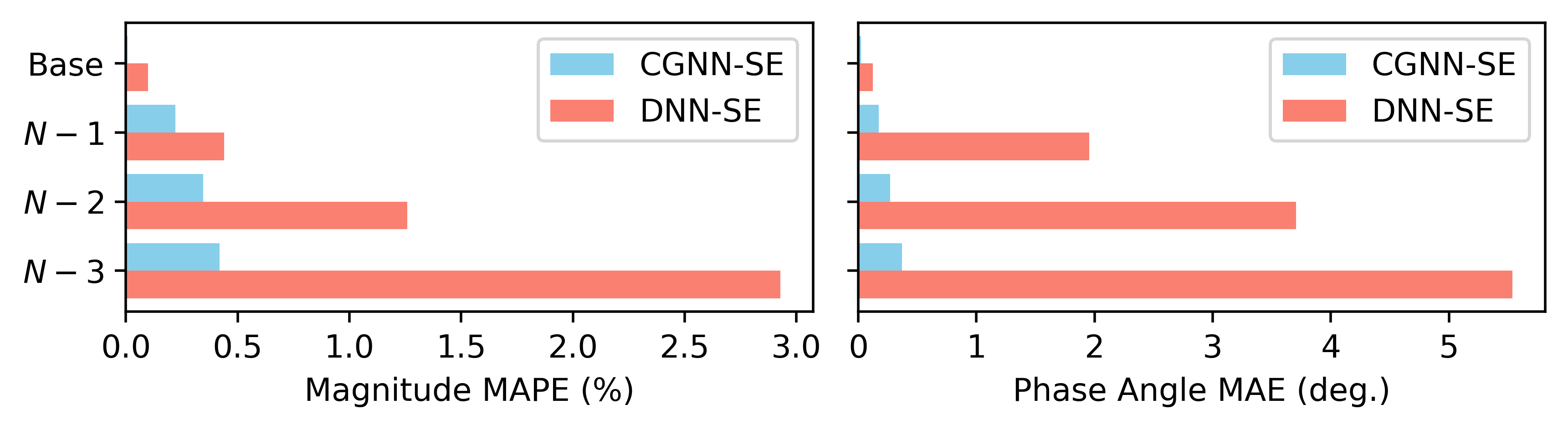}
\caption{Stability of the proposed CGNN-SE for IEEE 118-bus system under multiple line openings. ${N-1}$ corresponded to opening of line between \{(8-5)\}, ${N-2}$ corresponded to opening of lines between \{(8-5), (30-17)\}, while ${N-3}$ corresponded to opening of lines between \{(8-5), (30-17), (38-37)\}.}
\label{upper-bound}
\end{figure}

\begin{figure}[b]
\centering
\includegraphics[width=0.48\textwidth]{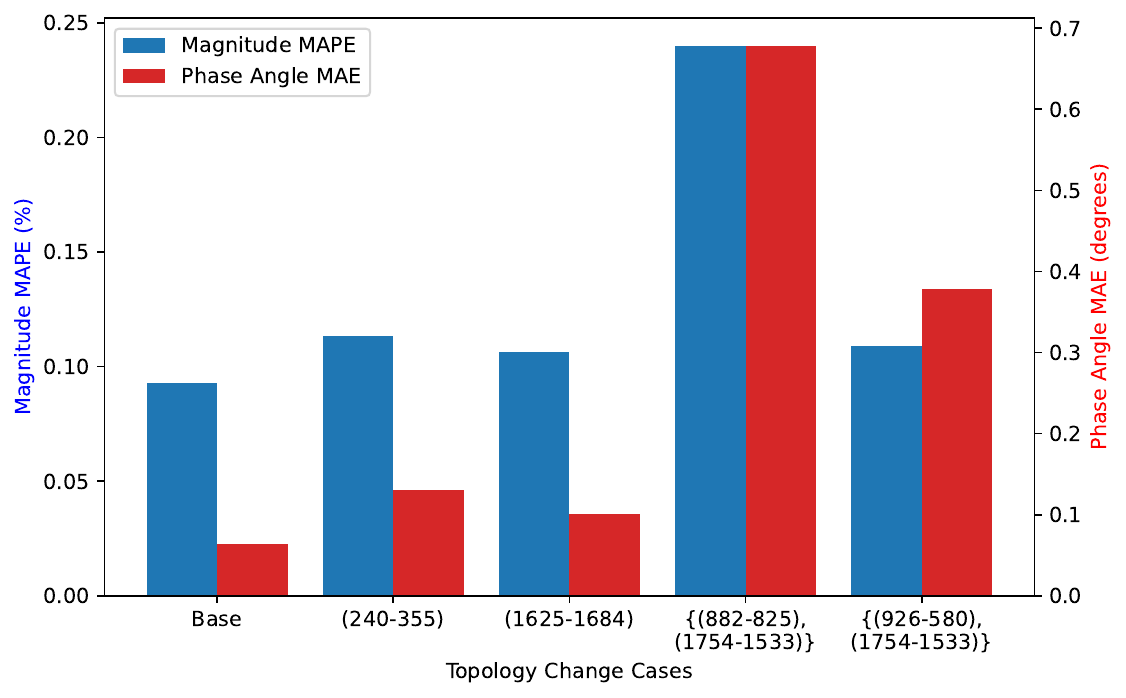}
\vspace{-1em}
\caption{Robustness of CGNN-SE to topology changes including $N-1$ and $N-2$ contingencies in 2000-bus Texas system. On the X-axis, $\mathrm{Base}$ refers to the nominal (base) topology where there are no line outages. The other indices denote the \textit{from-to} buses of the lines that have been removed.}
\label{topology_change_2000-bus}
\end{figure}

To assess the robustness of the proposed CGNN-SE to topology changes in large-scale systems, we tested its performance on the 2000-bus system under $N-1$ and $N-2$ contingencies. The results of this evaluation are shown in Fig. \ref{topology_change_2000-bus}, which highlights the robustness of the CGNN-SE model for different scenarios, in comparison to the $\mathrm{Base}$ case with no line outages.
As illustrated in the figure, the CGNN-SE maintains strong performance across these scenarios. While the cases involving two line outages exhibit relatively higher estimation errors than those with single line outages, the results confirm
the reasonableness of the CGNN-SE outputs under such conditions.

\subsection{Robustness to PMU Failure}
We now illustrate the robustness of the CGNN-SE in scenarios involving real-time PMU failures, where data loss occurs during online evaluation. 
Fig. \ref{pmu-failure} depicts the \textit{average} MAPE and MAE for magnitudes and phase angles, respectively, for different combinations of PMU failures
in the 118-bus system. For example, the third bar corresponds to the average estimation errors of ${\binom{11}{2}}$ cases, where PMUs fail at any two buses (out of 11 buses) in this system.
It is observed from the figure that after the initial rise (from when there is no PMU failure vs. one PMU failure), the errors in magnitude and angle estimates almost plateau out.
Similar results were also obtained for the 2000-bus system. 
This robustness (to PMU failures) is attributed to the CGNN-SE's distinctive structure that allows it to substitute missing features with their corresponding GMM parameters during online evaluation.

\begin{figure}[ht]
\centering
\includegraphics[width=0.45\textwidth]{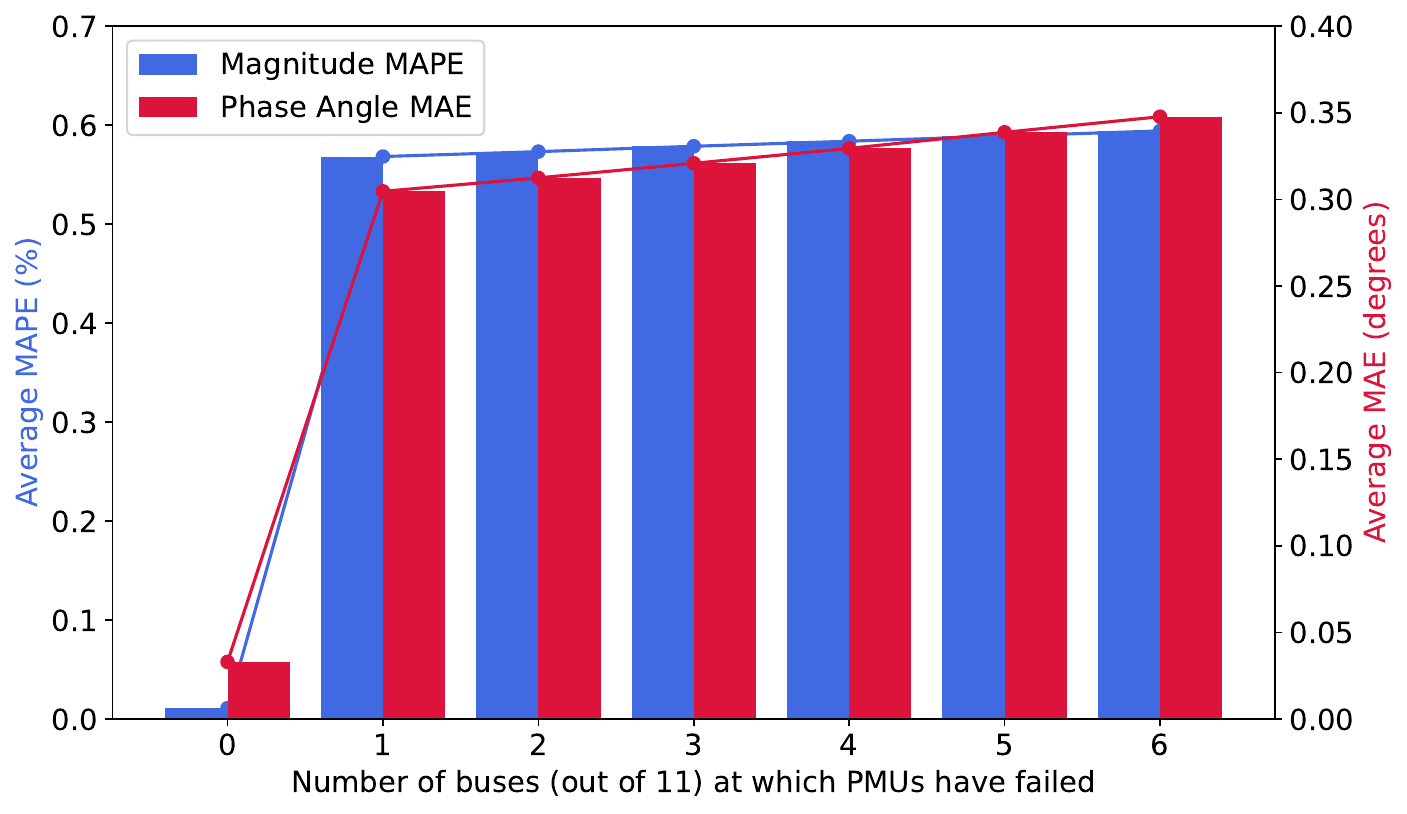}
\vspace{-1em}
\caption{Robustness of CGNN-SE to PMU failures in IEEE 118-bus system}
\label{pmu-failure}
\end{figure}

\subsection{Robustness to Topology Changes and PMU Failures}
To further verify the robustness of the proposed CGNN-SE, we conducted experiments under scenarios where both topology changes and PMU failures occur simultaneously. 
This corresponds to a scenario where there are structural changes occurring in both the power system \textit{and} the sensing system. 
The aim was to assess how the CGNN-SE model responds under such adverse conditions.
The results of this study are illustrated for the 118-bus system in Fig. \ref{topology_change_pmu-failure}, which compares performance across four different cases alongside the $\mathrm{Base}$ case, where no topology change or PMU failure occurs.
It can be observed from the figure that the CGNN-SE model continues to give reasonable
performance even when subjected to simultaneous topology changes and PMU failures.

\begin{figure}[h]
\centering
\includegraphics[width=0.48\textwidth]{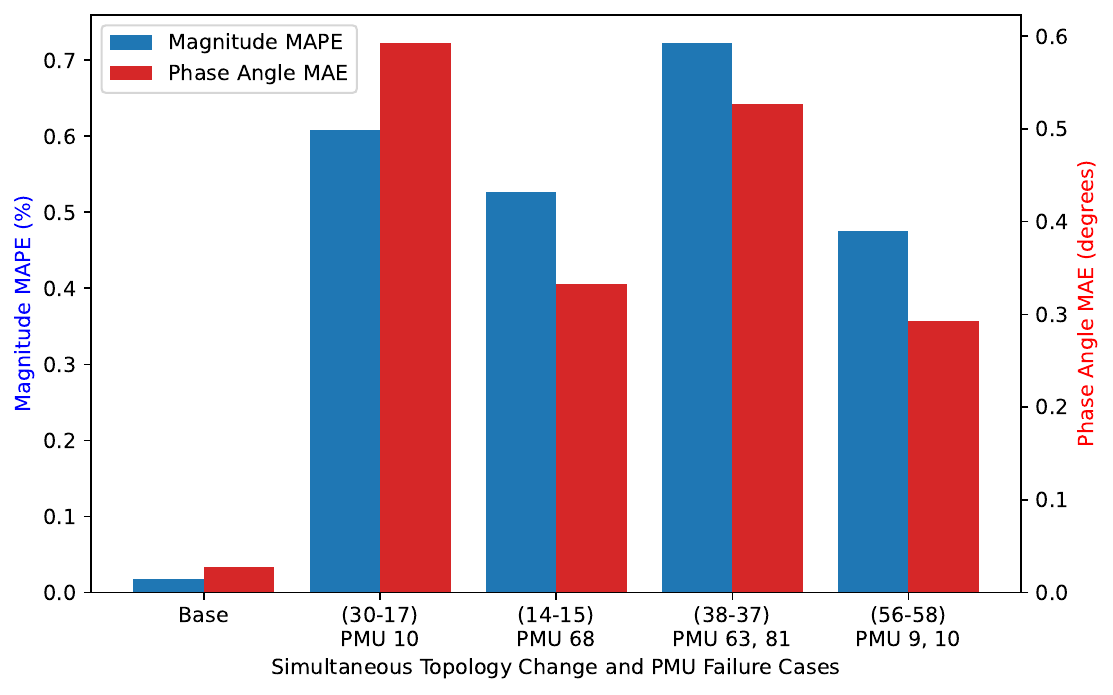}
\vspace{-0.5em}
\caption{Robustness of CGNN-SE to simultaneous topology change and PMU failure in IEEE 118-bus system. On the X-axis, $\mathrm{Base}$ refers to the nominal (base) topology with no line outages or PMU failures. The remaining indices represent the removed line (\textit{from-to} buses) and the corresponding PMU failure(s). For example, ``(30-17) PMU 10" indicates a line outage between buses 30 and 17 with a PMU failure at bus 10.}
\label{topology_change_pmu-failure}
\end{figure}

\vspace{-1em}

\subsection{Robustness to Non-Gaussian Noise in PMU Measurements}
Table \ref{table2} compares the performance of optimization-based and learning-based state estimators in presence of Gaussian and non-Gaussian noises in PMU measurements. 
The non-Gaussian noise was described by a two-component GMM
whose mean, standard deviation, and weights for both magnitudes and phase angles are $(-0.4\%, 0.6\%)$ and $(-0.2^\circ, 0.3^\circ)$, $(0.25\%, 0.25\%)$ and $(0.12^\circ, 0.12^\circ)$, and $(0.4, 0.6)$, respectively.
These values were chosen to ensure that the noise still satisfied the 1\% TVE requirement of \cite{8577045}.
It is clear from Table \ref{table2} that the least squares-based LSE is not robust against non-Gaussian noise as the errors increased considerably.
Due to their better approximation capabilities, the learning-based estimators are more robust against non-Gaussian measurement noises with the CGNN-SE showing best performance (lowest error) because of its specific structure and customized geometric learning process.

\vspace{-0.5em}

\begin{table}[ht]
\caption{Robustness of CGNN-SE to non-Gaussian noise in PMU measurements on IEEE 118-bus system}
\resizebox{\columnwidth}{!}{%
\begin{tabular}{@{}lcccc@{}}
\toprule
\multirow{3}{*}{\textbf{Approach}} & \multicolumn{2}{c}{\textbf{1\% TVE Gaussian Noise}} & \multicolumn{2}{c}{\textbf{1\% TVE Non-Gaussian Noise}} \\ 
\cmidrule(lr){2-3} \cmidrule(lr){4-5} 
 & \multirow{2}{*}{\begin{tabular}[c]{@{}c@{}}Magnitude\\ MAPE (\%)\end{tabular}} & \multirow{2}{*}{\begin{tabular}[c]{@{}c@{}}Phase Angle\\ MAE (degrees)\end{tabular}} & \multirow{2}{*}{\begin{tabular}[c]{@{}c@{}}Magnitude\\ MAPE (\%)\end{tabular}} & \multirow{2}{*}{\begin{tabular}[c]{@{}c@{}}Phase Angle\\ MAE (degrees)\end{tabular}} \\
 &  &  &  &  \\ 
\midrule
LSE & 0.270 & 0.143 & 0.492 & 0.246 \\
DNN-SE & 0.161 & 0.230 & 0.173 & 0.250 \\
GNN-SE & 0.047 & 0.080 & 0.054 & 0.095 \\
\textbf{CGNN-SE} & \textbf{0.018} & \textbf{0.027} & \textbf{0.022} & \textbf{0.035} \\
\bottomrule
\end{tabular}%
}
\label{table2}
\end{table}

\subsection{Robustness to Bad Data}
In addition to non-Gaussian noise, bad data in the form of outliers also deteriorate performance of power system state estimators. 
To demonstrate this, we injected bad data, ranging from 10\% to 50\% of the input features (PMU measurements), directly into the CGNN-SE model for the 118-bus system, and observed a significant degradation in its performance (plots with dots in Fig. \ref{bad-data}).
Next, we employed the Wald test-based BDDC scheme (see Section \ref{overall-implementation} and Fig. \ref{flowchart}) to detect and correct bad data \textit{before} it entered the proposed state estimator. The state estimator outputs after bad data correction are also shown in Fig. \ref{bad-data} (plots with cross).
The figure shows that the estimation errors increase considerably if bad data is left untreated. 
However, the Wald test-based BDDC scheme is able to successfully detect and correct the bad data and keep the errors low for the proposed CGNN-SE framework.

\begin{figure}[ht]
\centering
\includegraphics[width=0.45\textwidth]{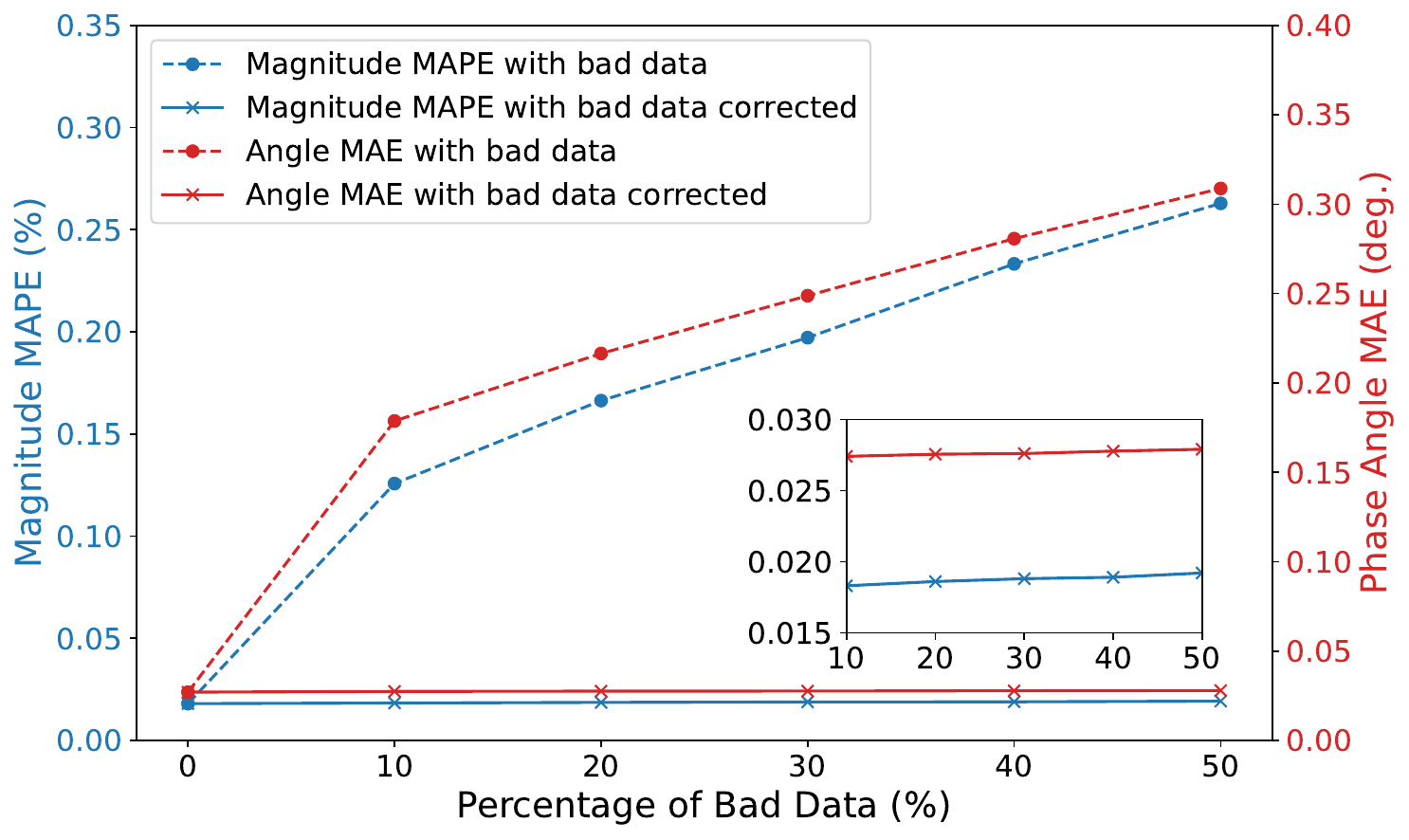}
\vspace{-1em}
\caption{Performance of CGNN-SE without and with bad data correction in IEEE 118-bus system. The inset provides a more zoomed-in view.
}
\label{bad-data}
\end{figure}

\subsection{Significance of Attention Mechanism}\label{attention}
To rigorously evaluate the effectiveness of the attention mechanism in the form of the MH-GAT layer within the proposed CGNN-SE model, we conducted an ablation study.
This involved comparing the performance of the proposed CGNN-SE model with and without the MH-GAT layer on the 118-bus system.
The results of this study are presented in Table \ref{table_ablation1}.
As demonstrated by the results, the exclusion of the MH-GAT layer significantly increased estimation errors (by a factor of $10$), confirming its importance in enhancing the model’s ability to prioritize relevant spatial features and thereby, improve estimation performance.

\begin{table}[ht]
\centering
\caption{CGNN-SE performance with and without MH-GAT layer on IEEE 118-bus system}
\resizebox{0.9\columnwidth}{!}{
\renewcommand{\arraystretch}{1.3} 
\begin{tabular}{@{}lcc@{}}
\toprule 
\multirow{2}{*}{\textbf{Approach}}  & \multirow{2}{*}{\textbf{\begin{tabular}[c]{@{}c@{}}Magnitude\\ MAPE (\%)\end{tabular}}}  & \multirow{2}{*}{\textbf{\begin{tabular}[c]{@{}c@{}}Phase Angle\\ MAE (degrees)\end{tabular}}}  \\
  \\
\midrule 
\textbf{CGNN-SE (with MH-GAT)} & \textbf{0.018} & \textbf{0.027} \\ 
CGNN-SE (without MH-GAT) & 0.179 & 0.314\\ 
\bottomrule
\end{tabular}
}
\label{table_ablation1}
\end{table}

Additionally, we explored the effect of varying the number of heads $K$ in the MH-GAT layer.
To assess the impact of this choice, we trained multiple CGNN-SE models with different numbers of heads on the 118-bus system.
The results, shown in Table \ref{table_ablation2}, indicate that increasing the number of heads initially led to reductions in estimation errors, suggesting that multiple heads contribute to more robust feature extraction by allowing the model to attend to different aspects of the data simultaneously. However, as is common in MH-GAT models, beyond a certain point, adding more heads increases the model’s complexity, which can lead to overfitting and decreased performance—a phenomenon known as diminishing returns 
\cite{ye2021sparse,wang2020multi,gong2024subgraph}.
In this case, as shown in Table \ref{table_ablation2}, increasing the number of heads from $K=1$ to $K=4$ improved feature representation and, accordingly, the estimation results. However, at $K=5$, the performance declined, indicating that adding more heads beyond $K=4$ will not further improve the model (due to overfitting).

\begin{table}
\centering
\caption{CGNN-SE performance with varying heads within MH-GAT on IEEE 118-bus system}
\resizebox{0.85\columnwidth}{!}{
\renewcommand{\arraystretch}{1} 
\begin{tabular}{@{}ccc@{}}
\toprule 
\multirow{2}{*}{\textbf{\textbf{\begin{tabular}[c]{@{}c@{}}Num. of heads\\ within MH-GAT layer\end{tabular}}}}  & \multirow{2}{*}{\textbf{\begin{tabular}[c]{@{}c@{}}Magnitude\\ MAPE (\%)\end{tabular}}}  & \multirow{2}{*}{\textbf{\begin{tabular}[c]{@{}c@{}}Phase Angle\\ MAE (degrees)\end{tabular}}}  \\ \\ 
\midrule 
$K=1$ & 0.131 & 0.117 \\
$K=2$ & 0.056 & 0.044 \\
$K=3$ & 0.030 & 0.031 \\
$K=4$ & 0.018 & 0.027 \\
$K=5$ & 0.020 & 0.037 \\
\bottomrule
\end{tabular}
}
\label{table_ablation2}
\end{table}

\subsection{Performance under Different Sets of PMUs}
In our initial analysis, we chose a fixed PMU set of 11 for the 118-bus system and 120 for the 2000-bus system.
To further analyze the effectiveness of CGNN-SE, we have conducted additional experiments using a range of PMU sets.
The results of these evaluations are listed in Table \ref{table5} along with the results of our initial analysis (from Table \ref{table1}).
It can be observed from Table \ref{table5} that the proposed CGNN-SE performs consistently across different PMU sets.

\begin{table}[ht]
\centering
\captionsetup{justification=centering}
\caption{Performance of the proposed CGNN-SE with different PMU sets on IEEE 118-bus and 2000-bus Texas systems}
\resizebox{0.9\columnwidth}{!}{
\begin{tabular}{@{}cccc@{}}
\toprule
\multirow{2}{*}{\rotatebox[origin=c]{0}{\textbf{System}}} & \multirow{2}{*}{\textbf{\begin{tabular}[c]{@{}c@{}}Num. of buses\\ with PMUs\end{tabular}}} & \multirow{2}{*}{\textbf{\begin{tabular}[c]{@{}c@{}}Magnitude\\ MAPE (\%)\end{tabular}}} & \multirow{2}{*}{\textbf{\begin{tabular}[c]{@{}c@{}}Phase Angle\\ MAE (degrees)\end{tabular}}} \\ [10pt]
\midrule
\multirow{3}{*}{\rotatebox[origin=c]{0}{118-bus}}
& 11$^{1}$ & 0.018 & 0.027  \\
& 11$^{2}$ & 0.010 & 0.021  \\
& 32$^{3}$ & 0.014 & 0.021  \\
\midrule 
\multirow{3}{*}{\rotatebox[origin=c]{0}{2000-bus}}
& 120$^{4}$ & 0.093 & 0.064  \\
& 100$^{5}$ & 0.090 & 0.069  \\
& 120$^{6}$ & 0.101 &  0.069 \\
\midrule
\multicolumn{4}{@{}l@{}}{$^{1}$ {\footnotesize PMUs at 11 high-voltage (345 kV) buses (Initial analysis in 118-bus)}}\\
\multicolumn{4}{@{}l@{}}{$^{2}$ {\footnotesize PMUs at 11 random buses}}\\
\multicolumn{4}{@{}l@{}}{$^{3}$ {\footnotesize PMUs at 32 selected buses reported in \cite{tian2021neural}}}\\
\multicolumn{4}{@{}l@{}}{$^{4}$ {\footnotesize PMUs at 120 high-voltage (500 kV) buses (Initial analysis in 2000-bus)}}\\
\multicolumn{4}{@{}l@{}}{$^{5}$ {\footnotesize PMUs at 100 random buses}}\\
\multicolumn{4}{@{}l@{}}{$^{6}$ {\footnotesize PMUs at 120 random buses}}\\
\bottomrule
\end{tabular}%
}
\label{table5}
\end{table}

\vspace{-1.1em}

\subsection{Comparison with a State-of-the-art State Estimator}\label{WLM}
A complex-variable weighted least modulus (WLM) state estimator
for systems equipped with PMUs has recently been introduced in \cite{jabr2023complex}. Unlike traditional approaches that operate solely within the real domain, this estimator leveraged complex variables to enhance estimation performance.
Table \ref{table3} compares the performance of the proposed CGNN-SE with the WLM-SE on 1354pegase system considering the same PMU configuration reported in \cite{jabr2023complex} and in terms of the same accuracy metric that was used in \cite{jabr2023complex}, namely,
\begin{equation}\label{new_metric}
\sigma_y^2 = \sum_{i=1}^{N} \left|{\hat{y}}_i - y_i\right|^2    
\end{equation}
where, $N$ represents the number of nodes (buses) in the power system graph, and $y_i$ and $\hat{y_i}$ refer to the actual and estimated states in complex form, respectively.
From the results provided in Table \ref{table3}, it can be inferred that the proposed CGNN-SE has a smaller error in comparison to the WLM approach, suggesting that CGNN-SE is more accurate for state estimation on the 1354pegase system under same conditions.
These results also demonstrate the scalability of the proposed CGNN-SE and its effectiveness when applied to large-scale systems.

\vspace{-0.8em}

\begin{table}[hb]
\centering
\caption{Comparison of CGNN-SE with WLM-SE for 1354pegase system}
\tiny
\resizebox{0.63\columnwidth}{!}{%
\renewcommand{\arraystretch}{1.5} 
\begin{tabular}{@{}ccccc@{}}
\toprule 
  & WLM-SE \cite{jabr2023complex} & \textbf{CGNN-SE} \\ 
\midrule 
$\sigma_y^2$ & 0.011 &  \textbf{0.003}& \\
\bottomrule
\end{tabular}%
}
\label{table3}
\end{table}

\vspace{-1em}


\section{Conclusion and Future Work}
\label{Conclusions}

In this paper, we present CGNN-SE, an innovative deep geometric learning framework designed for estimating power system states when PMU observability is limited. By combining customized GCN and MH-GAT layers, CGNN-SE effectively captures complex node relationships and dependencies as well as ensures robustness against topology changes during real-time evaluation.
We also provide mathematical guarantees for the framework's adaptability to topology changes.
The CGNN-SE model intelligently integrates historical SCADA data to address challenges from PMU scarcity and real-time PMU failures, while also demonstrating robustness to non-Gaussian measurement noise—a major weakness of traditional optimization-based methods.
Moreover, the proposed CGNN-SE framework is found to work well in presence of bad data, attain PMU-timescale operation, and be scalable.
These features make CGNN-SE a versatile tool for achieving high-speed, accurate, and reliable situational awareness of power systems that are incompletely observed by PMUs.

Future work involves applying the proposed CGNN-SE approach to real-world power systems to assess its practical effectiveness under actual operating conditions.
Deriving an upper limit on the maximum errors produced by the CGNN-SE in absence of topology changes (similar to what was done for DNN-SE in \cite{azimian2024analytical}) can also be performed.
Additionally, the performance of CGNN-SE can be further evaluated in scenarios where the power system graph undergoes more complex evolutions
such as simultaneous topology changes and the presence of bad data.
Extending this approach to distribution systems is another promising direction, given their distinct structural and operational characteristics. Moreover, incorporating temporal dynamics with time-series data could enable the model to adapt to evolving grid conditions more effectively. Finally, we can explore broader applications of GNNs in power systems, leveraging their strong learning and generalization capabilities to enhance monitoring and decision-making processes.

\appendix
{\small
\setcounter{equation}{0}
\numberwithin{equation}{subsection}

\subsection{Proof of Theorem \ref{theorem2}}\label{appendix1}
\begin{proof}
The $i^{th}$ row and $j^{th}$ column of $AXW$ can be written as:
\begin{equation}\label{eq:app1}
(AXW)_{ij} = \sum_{f=1}^{F}\sum_{n=1}^{N}a_{in}x_{nf}w_{fj}
\end{equation}
where $a$, $x$, and $w$ are elements of matrices $A$, $X$, and $W$, respectively.
Since $X$ follows a mixture of Gaussians, \eqref{eq:app1} can also be expressed in the form
shown below:
\begin{align}\label{eq:app2}
\begin{aligned}
    &\sum_{f=1}^{F}\sum_{n=1}^{N}a_{in}x_{nf}w_{fj}\\ 
    &\sim \sum_{c=1}^{C}\pi_i^{c}\mathcal{N} \left (\sum_{f=1}^{F}\sum_{n=1}^{N}a_{in}m_{nf}^{c}w_{fj}, \sum_{f=1}^{F}\sum_{n=1}^{N}a_{in}^2 s_{nf}^{c}w_{fj}^2 \right)\\
    &= \sum_{c=1}^{C}\pi_i^{c}\mathcal{N} ({(AM^{c}W)}_{ij}, {\left ((A\odot A)S^{c}(W\odot W) \right)}_{ij} )
\end{aligned}
\end{align}

Now, setting $ \hat{s}_{ij}^{c}={((A\odot A)S^{c}(W\odot W) )}_{ij}$, and $\hat{m}_{ij}^{c}={(AM^{c}W)}_{ij}$ in \eqref{eq:app2}, and combining it with \eqref{eq:app1}, we get:
\begin{equation}\label{eq:app3}
    (AXW)_{ij} = \sum_{c=1}^{C}\pi_i^{c}\mathcal{N}(\hat{m}_{ij}^{c}, \hat{s}_{ij}^{c})
\end{equation}

Lastly, according to Theorem \ref{theorem1}, we get:
\begin{equation}\label{eq:app4}
\begin{aligned}
    &\ReLU[(AXW)_{ij}] = \sum_{c=1}^{C}\pi_i^{c}\ReLU \left [ \mathcal{N}(\hat{m}_{ij}^{c}, \hat{s}_{ij}^{c}) \right]\\ 
    &= \sum_{c=1}^{C}\pi_i^{c}\sqrt{\hat{s}_{ij}^{c}}\NR{\left( \frac{\hat{m}_{ij}^{c}}{\sqrt{\hat{s}_{ij}^{c}}}  \right)}
\end{aligned}
\end{equation}

\end{proof}

\setcounter{equation}{0}
\numberwithin{equation}{subsection}

\vspace{-2.3em}

\subsection{Proof of Theorem \ref{theorem3}}\label{appendix2}

\begin{proof}
Let $X^{(l)}=\sigma\left(AX^{(l-1)}{W^\top}^{(l)}\right)$ be the output of the mapping function $\Phi(A,X,W)$ at layer $l$.
We consider columns of $X^{(l)} \in \R^{N \times f^{(l)}}$ as graph signals $\{{x^{(l)}_{:c}}\}_{c=1}^{f^{(l)}} \in \R^{N}$ such that at every $l$ there exist $f^{(l)}$ graph signals, each defined based on the graph signals from the previous layer as shown below:
\begin{equation}\label{eq:app5}
x^{(l)}_{:c} =\sigma \left( \sum_{g=1}^{f^{(l-1)}} w^{(l)}_{cg} A  {x^{(l-1)}_{:g}} \right) 
\end{equation}

Here, $w^{(l)}_{cg}$ denotes the elements of the learnable weight matrix $W \in \R^{f^{(l)} \times f^{(l-1)}}$, and $x^{(l-1)}_{:g}$ is the graph signals at layer $l-1$ such that $c=1, \dots, f^{(l)}$ and $g=1, \dots, f^{(l-1)}$.
Now, we assess the upper bound of the error between the final outputs of the mapping function $\Phi(\cdot)$ at layer $L$, after a topology change $\left(\{A,X\} \rightarrow \{A',X'\}\right)$.
This assessment takes into account that the output of $\Phi(\cdot)$ represents a collection of graph signals $\{{x^{(L)}_{:c}}\}_{c=1}^{f^{(L)}}$ as shown below \cite{gama2020stability}:
\begin{equation}\label{eq:app6}
{\left\lVert \Phi(A,X,W) - \Phi(A',X',W) \right\rVert}^2 = \sum_{c=1}^{f^{(L)}} {\left\lVert x^{(L)}_{:c} - x'^{(L)}_{:c} \right\rVert}^2
\end{equation}
where, $x'^{(L)}_{:c}$ denotes the graph signals at the final layer $L$ after a topology change.
Considering \eqref{eq:app5}, and replacing $w^{(L)}_{cg}A$ and $w^{(L)}_{cg}A'$ by $H^{(L)}_{cg}$ and $H'^{(L)}_{cg}$, respectively, 
we obtain the following for the right-hand side norm of \eqref{eq:app6}:
\begin{equation}\label{eq:app7}
\begin{aligned}
    &{\left\lVert x^{(L)}_{:c} - x'^{(L)}_{:c} \right\rVert} \\
    &=  \left \lVert \sigma \left (\sum_{g=1}^{f^{(L-1)}}H^{(L)}_{cg} x^{(L-1)}_{:g} \right ) - \sigma \left(\sum_{g=1}^{f^{(L-1)}} H'^{(L)}_{cg} x'^{(L-1)}_{:g}\right )  \right \lVert
\end{aligned}  
\end{equation}

Considering the Lipschitz\footnote{The non-linearity $\sigma(\cdot)$, which satisfies $\sigma(0) = 0$, is Lipschitz if, for any $a$, $b \in \R$, there exists a
constant $C_{\sigma} > 0$ such that $\left|\sigma(a)-\sigma(b)\right| \leq C_{\sigma} \left|a-b\right|$.}
continuity of non-linearity $\sigma(\cdot)$,
and applying the triangle inequality, we obtain:

\begin{equation}\label{eq:app8}
\begin{aligned}
    & {\left\lVert x^{(L)}_{:c} - x'^{(L)}_{:c} \right\rVert} \\
    & \leq C_{\sigma} \sum_{g=1}^{f^{(L-1)}} \left\lVert H^{(L)}_{cg} x^{(L-1)}_{:g} -  H'^{(L)}_{cg} x'^{(L-1)}_{:g} \right\lVert
\end{aligned} 
\end{equation}

Adding and subtracting $H'^{(L)}_{cg} x^{(L-1)}_{:g}$ within the sum, and then reapplying the triangle inequality, for the right-hand side norm of \eqref{eq:app8}, we get:
\begin{equation}\label{eq:app9}
\begin{aligned}
    & \left\lVert H^{(L)}_{cg} x^{(L-1)}_{:g} -  H'^{(L)}_{cg} x'^{(L-1)}_{:g} \right\lVert \\
    & \leq \left\lVert \left (H^{(L)}_{cg} -  H'^{(L)}_{cg} \right) x^{(L-1)}_{:g}\right\lVert + \left\lVert H'^{(L)}_{cg} \left(x^{(L-1)}_{:g} -  x'^{(L-1)}_{:g} \right)\right\lVert
\end{aligned}  
\end{equation}

Now, using the sub-multiplicative property of the operator norm,
we can rewrite \eqref{eq:app9} as:
\begin{equation}\label{eq:app10}
\begin{aligned}
    & \left\lVert H^{(L)}_{cg} x^{(L-1)}_{:g} -  H'^{(L)}_{cg} x'^{(L-1)}_{:g} \right\lVert \\
    & \leq \left\lVert H^{(L)}_{cg} -  H'^{(L)}_{cg}\right\lVert \left\lVert x^{(L-1)}_{:g}\right\lVert  +    \left\lVert H'^{(L)}_{cg}\right\lVert \left\lVert x^{(L-1)}_{:g} -  x'^{(L-1)}_{:g} \right\lVert
\end{aligned}    
\end{equation}

As $\left\| A - A' \right\| \leq \epsilon$ and $\left| {w^{(l)}_{ij}} \right| \leq \delta$ for all layers $l=1, \dots, L$,
we can replace the first term in right-hand side of \eqref{eq:app10} by $\delta\epsilon  \left\lVert x^{(L-1)}_{:g}\right\lVert$.
For the second term, we make use of the fact that $\left\lVert H'^{(L)}_{cg}\right\lVert \leq B$ 
to obtain the following relation from \eqref{eq:app8}:
\begin{equation}\label{eq:app11}
\begin{aligned}
    & \left\lVert x^{(L)}_{:c} - x'^{(L)}_{:c} \right\rVert \\
    & \leq C_{\sigma} \sum_{g=1}^{f^{(L-1)}} \left(\delta\epsilon \left\lVert x^{(L-1)}_{:g} \right\lVert + B \left\lVert x^{(L-1)}_{:g} -  x'^{(L-1)}_{:g} \right\lVert \right)
\end{aligned}    
\end{equation}

We observe that \eqref{eq:app11} illustrates a recursive relationship, where the bound at layer $L$ depends on the bound at layer $L - 1$ and the norm of the features at layer $L - 1$ summed up over all features. Now, for any arbitrary layer $l \in \{1, \dots, L\}$, we can rewrite this recursion as:
\begin{equation}\label{eq:app12}
\begin{aligned}
    & \left\lVert x^{(l)}_{:c} - x'^{(l)}_{:c} \right\rVert \\
    & \leq C_{\sigma} \sum_{g=1}^{f^{(l-1)}} \left(\delta\epsilon \left\lVert x^{(l-1)}_{:g} \right\lVert + B \left\lVert x^{(l-1)}_{:g} -  x'^{(l-1)}_{:g} \right\lVert \right)
\end{aligned}    
\end{equation}

To simplify the right-hand side of \eqref{eq:app12}, we express
the norm $\left\lVert x^{(l)}_{:c} \right\rVert$ as:
\begin{equation}\label{eq:app13}
\begin{aligned}
    &\left\lVert x^{(l)}_{:c} \right\rVert = \left\lVert \sigma \left( \sum_{g=1}^{f^{(l-1)}} H^{(l)}_{cg} {x^{(l-1)}_{:g}} \right)\right\rVert  \leq C_{\sigma} \left\lVert \sum_{g=1}^{f^{(l-1)}} H^{(l)}_{cg} x^{(l-1)}_{:g} \right\lVert \\
    & \leq  C_{\sigma}B  \sum_{g=1}^{f^{(l-1)}}  \left\lVert x^{(l-1)}_{:g} \right\lVert \leq \left (C_{\sigma}B \right)^{l} \prod_{l'=1}^{l-1}f^{(l')}\sum_{g=1}^{f^{(0)}}\left\lVert x^{(0)}_{:g} \right\lVert
\end{aligned}    
\end{equation}

In \eqref{eq:app13}, we make use of \eqref{eq:app5}, followed by the triangle inequality, and the recursion relationship considering $\left\lVert x^{(0)}_{:g}\right\lVert$ as the initial graph signals for $g=1,\dots,f^{(0)}$.
Lastly, as topology changes have localized effects on the graph signals \cite{ejebe1992adaptive}, the term $\left\lVert x^{(l-1)}_{:g} -  x'^{(l-1)}_{:g} \right\lVert$ in \eqref{eq:app12} becomes negligibly small.
Implementing these changes in \eqref{eq:app12}, we obtain:
\begin{equation}\label{eq:app14}
\begin{aligned} 
    & \left\lVert x^{(l)}_{:c} - x'^{(l)}_{:c} \right\rVert \\
    & \leq \delta\epsilon \left (C_{\sigma}B \right)^{l-1}  \left(\sum_{l'=1}^{l}C^{l'}_{\sigma} \right) \left (\prod_{l'=1}^{l-2}f^{(l')}\right) \sum_{g=1}^{f^{(0)}}\left\lVert x^{(0)}_{:g} \right\lVert
\end{aligned}    
\end{equation}

Applying $l=L$ and using it back in \eqref{eq:app6}, we obtain:
\begin{equation}\label{eq:app15}
\begin{aligned}
    &{\lVert \Phi(A,X,W) - \Phi(A',X',W) \rVert}^2 = \\
    &\leq \sum_{c=1}^{f^{(L)}} {\left(\delta\epsilon \left (C_{\sigma}B \right)^{L-1}   \sum_{l'=1}^{L}C^{l'}_{\sigma} \prod_{l'=1}^{L-2}f^{(l')} \sum_{g=1}^{f^{(0)}}\left\lVert x^{(0)}_{:g} \right\lVert \right)}^2
\end{aligned}    
\end{equation}

Since no term in \eqref{eq:app15} depends on $c$, we simplify the summation over $c$ and then apply the square root to obtain:
\begin{equation}\label{eq:app16}
\begin{aligned}
    &\lVert \Phi(A,X,W) - \Phi(A',X',W) \rVert \\
    &\leq \sqrt{f^{(L)}}\delta\epsilon \left (C_{\sigma}B \right)^{L-1}   \sum_{l'=1}^{L}C^{l'}_{\sigma} \prod_{l'=1}^{L-2}f^{(l')} \sum_{g=1}^{f^{(0)}}\left\lVert x^{(0)}_{:g} \right\lVert
\end{aligned}    
\end{equation}

Finally, setting $f^{(0)} = f^{(L)}=2$ (as there are two features to be estimated),
$C_{\sigma} = 1$ (Lipshitz constraint), $f^{(1)}=\dots=f^{(L-2)}=F$, and $\lambda = \sum_{g=1}^{2}\left\lVert x^{(0)}_{:g} \right\lVert$, the following (final) bound is obtained, which completes the proof:
\begin{equation}\label{eq:app17}
\lVert \Phi(A,X,W) - \Phi(A',X',W) \rVert \leq  \sqrt{2}\lambda\delta\epsilon L B^{L-1} F^{L-2}
\end{equation}
\end{proof}
}

\vspace{-2.2em}

\bibliographystyle{ieeetr}
{\footnotesize
\bibliography{References.bib}}




\begin{IEEEbiography}[{\includegraphics[width=1in,height=1.25in,clip,keepaspectratio]{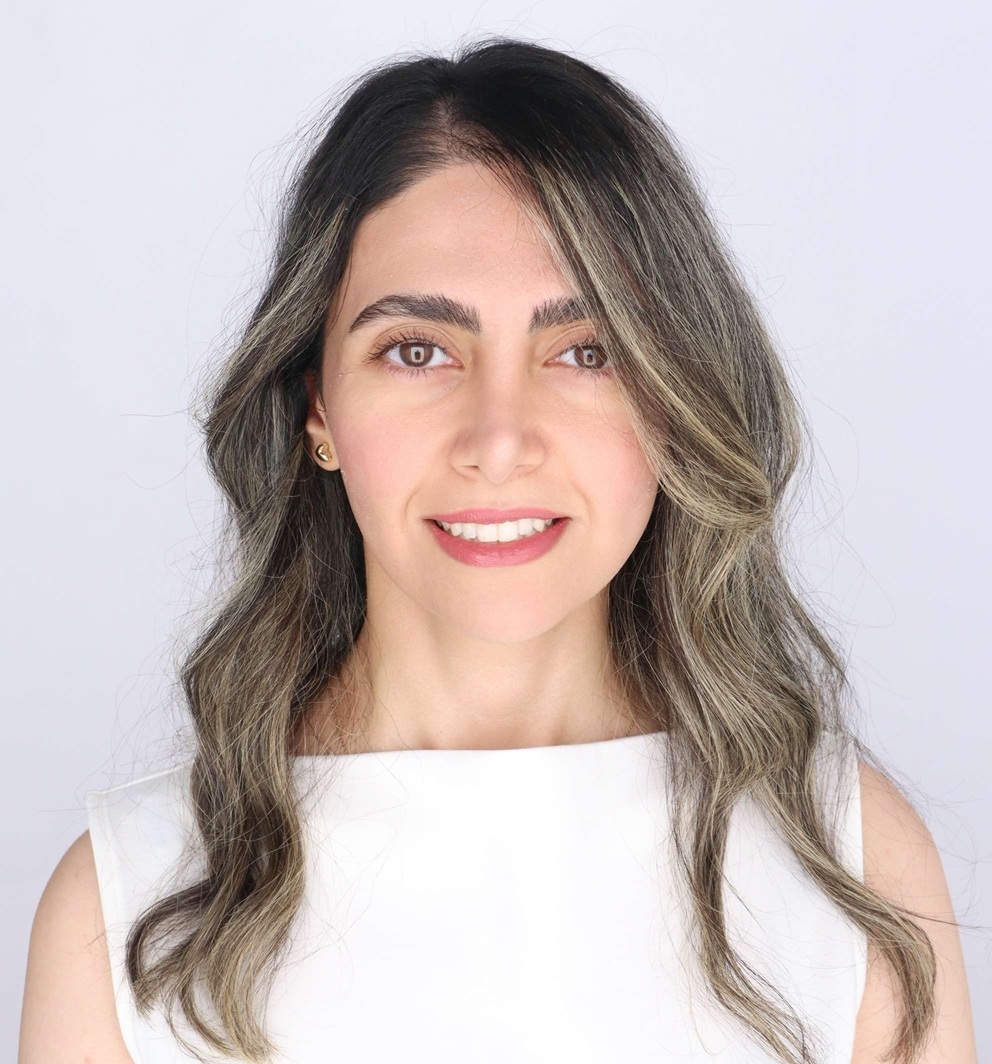}}] {Shiva Moshtagh} (Student Member, IEEE) received the B.Sc. degree in electrical engineering from the Jundi-Shapur University of Technology, Dezful, Iran, in 2015, and the M.Sc. degree in electrical engineering from Imam Khomeini International University, Qazvin, Iran, in 2019. She is currently pursuing the Ph.D. degree in electrical engineering with Arizona State University, Tempe, AZ, USA. Her research interests include power system monitoring, estimation, and optimization. In recent years, she has focused on applying machine learning-based
methods to power systems.
\end{IEEEbiography}

\begin{IEEEbiography}[{\includegraphics[width=1in,height=1.25in,clip,keepaspectratio]{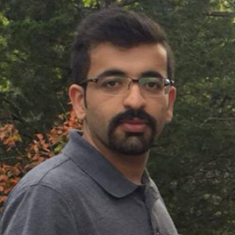}}] {Behrouz Azimian} received the B.Sc. degree in electrical engineering from the Iran University of Science and Technology, Tehran, Iran, in 2016, the M.Sc. degree in electrical engineering from Alfred University, NY, USA, in 2019, and the Ph.D. degree in electrical engineering from Arizona State University, AZ, USA, in 2024. He is currently a Senior Electricity Market Software Engineer at GE Vernova. His research interests include machine learning, deep learning, and the application of artificial intelligence and optimization techniques in power systems, including state estimation and electricity markets.
\end{IEEEbiography}

\begin{IEEEbiography}[{\includegraphics[width=1in,height=1.25in,clip,keepaspectratio]{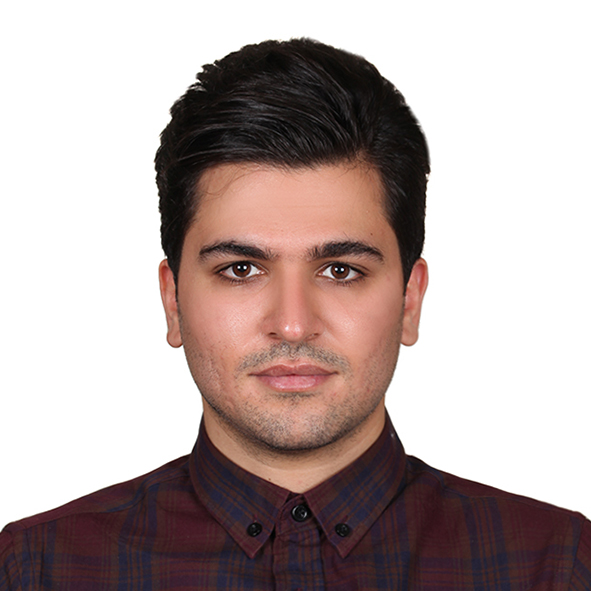}}] {Mohammad Golgol} (Student Member, IEEE) received the B.Sc. degree in electrical engineering from Jundi-Shapur University of Technology, Dezful, Iran, in 2015, and the M.Sc. degree in electrical engineering from K. N. Toosi University of Technology, Tehran, Iran, in 2019. He is currently pursuing the Ph.D. degree in electrical engineering at Arizona State University, Tempe, AZ, USA. His research interests include machine learning applications, optimization and control techniques in power systems. Recently, he has primarily focused on leveraging machine learning techniques including, deep reinforcement learning, and generative models to enhance power system stability and efficiency, facilitate renewable energy integration, and optimize electric vehicle charging coordination.
\end{IEEEbiography}

\begin{IEEEbiography}[{\includegraphics[width=1in,height=1.25in,clip,keepaspectratio]{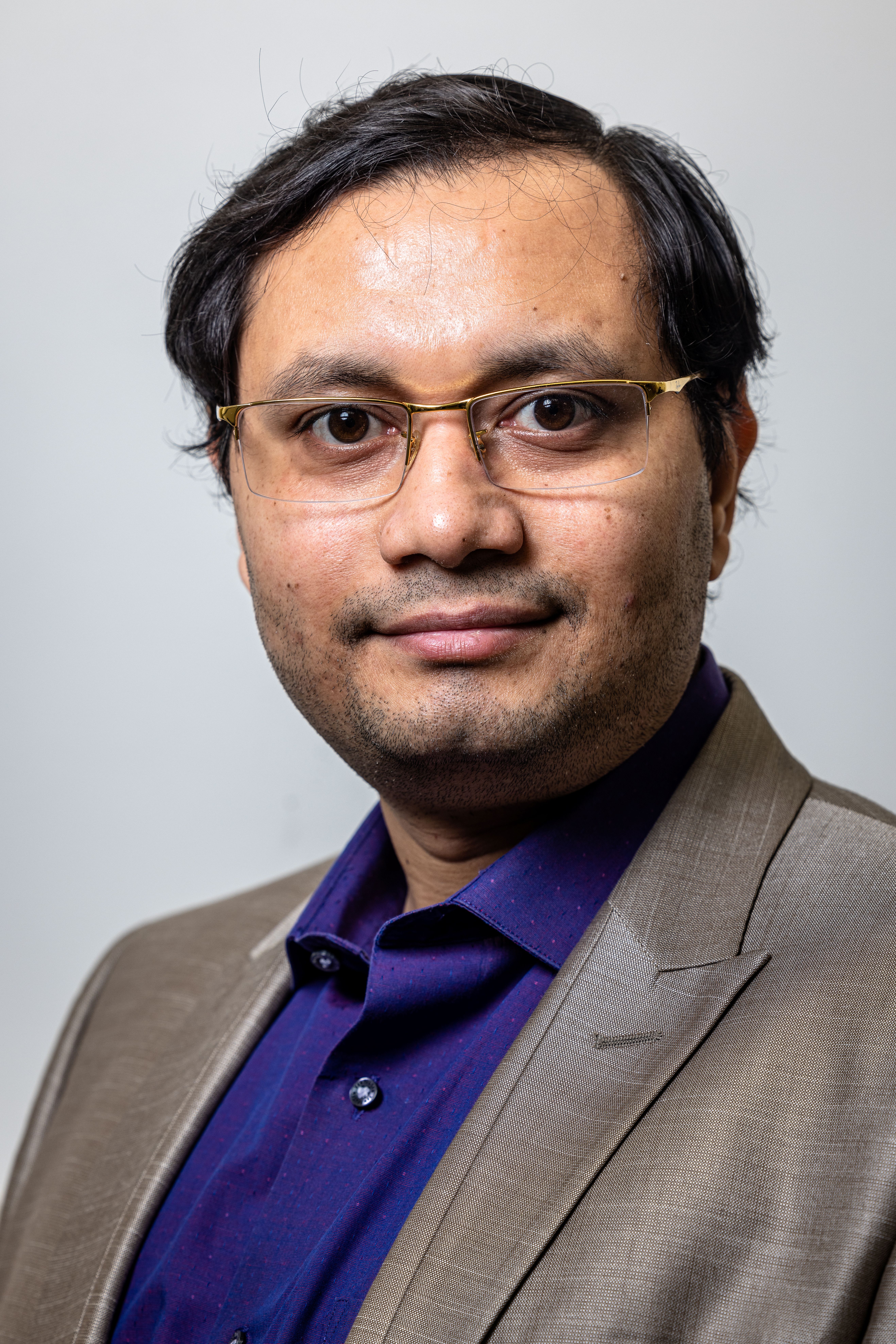}}] {Anamitra Pal} (Senior Member, IEEE) received the B.E. degree (summa cum laude) in electrical and electronics engineering from the Birla Institute of Technology at Mesra, Ranchi, India, in 2008, and the M.S. and Ph.D. degrees in electrical engineering from Virginia Tech, Blacksburg, VA, USA, in 2012 and 2014, respectively. He is currently an Associate Professor in the School of Electrical, Computer, and Energy Engineering at Arizona State University (ASU), Tempe, AZ, USA. His research interests include data analytics with a special emphasis on time-synchronized measurements, artificial intelligence (AI)-applications in power systems, and critical infrastructure resiliency assessment. Dr. Pal has received the 2018 Young CRITIS Award for his valuable contributions to the field of critical infrastructure protection, the 2019 Outstanding Young Professional Award from the IEEE Phoenix Section, the 2022 NSF CAREER Award, and the Centennial Professorship Award from ASU in 2023.
\end{IEEEbiography}




\vfill

\end{document}